\newtheorem{theorem}{Theorem}
\newtheorem{lemma}{Lemma}
\newtheorem{definition}{Definition}
\DeclareMathOperator*{\poly}{poly}
\newcommand{\lnln}{\ln\ln}
\title{Computing an Approximately Optimal Agreeable Set of Items\footnote{A preliminary version of this paper appeared in Proceedings of the 26th International Joint Conference on Artificial Intelligence, July 2017.}}
\author{
Pasin Manurangsi\\UC Berkeley
\and
Warut Suksompong\\Stanford University
}
\date{\vspace{-5ex}}
\begin{document}

\maketitle

\begin{abstract}
We study the problem of finding a small subset of items that is \emph{agreeable} to all agents, meaning that all agents value the subset at least as much as its complement. Previous work has shown worst-case bounds, over all instances with a given number of agents and items, on the number of items that may need to be included in such a subset. Our goal in this paper is to efficiently compute an agreeable subset whose size approximates the size of the smallest agreeable subset for a given instance. We consider three well-known models for representing the preferences of the agents: ordinal preferences on single items, the value oracle model, and additive utilities. In each of these models, we establish virtually tight bounds on the approximation ratio that can be obtained by algorithms running in polynomial time.
\end{abstract}

\section{Introduction}

A typical resource allocation problem involves dividing a set of resources among interested agents. We are often concerned with the \emph{efficiency} of the allocation, e.g., achieving high social welfare or ensuring that there is no other allocation that would make every agent better off than in the current allocation. Another important issue is the \emph{fairness} of the allocation. For example, we might want the resulting allocation to be \emph{envy-free}, meaning that every agent regards her bundle as the best among the bundles in the allocation \cite{Foley67,Varian74}, or \emph{proportional}, meaning that every agent obtains at least her proportionally fair share \cite{Steinhaus48}. A common feature of such problems is that one agent's gain is another agent's loss: the setting inherently puts the agents in conflict with one another, and our task is to try to resolve this conflict as best we can according to our objectives.

We consider a variant of the resource allocation problem where instead of the agents being pitted against one another, they belong to one and the same group. We will collectively allocate a subset of items to this group, and our goal is to make this subset ``agreeable'' to all agents. Agreeability can be thought of as a minimal desirability condition: While an agent may be able to find other subsets of items that she personally prefers, the current set is still acceptable for her and she can agree with the allocation of the set to the group. Without further constraints, the problem described so far would be trivial, since we could simply allocate the whole set of items to the agents. We therefore impose a constraint that the allocated subset should be small. This constraint on size is reasonable in a variety of settings. For instance, the agents could be going together on a trip and there is limited space in the luggage. Alternatively, they could be receiving some items in a resource allocation setting where the preferences of the other groups are not known or are given lower priority, perhaps because the groups have not arrived or are placed lower in a team competition, so we want the subset to be agreeable to the first group while leaving as many items as possible to the remaining groups. 

The problem of allocating a small agreeable subset of items was first studied by Suksompong, who defined the notion of agreeability based on the fairness notion of envy-freeness \cite{Suksompong16}. A subset of items is said to be \emph{agreeable} to an agent if the agent likes it at least as much as the complement set. Agreeability, or minor variants thereof, has been considered in the context of fair division, where each group consists of a single agent \cite{BouveretEnLa10,BramsKiKl12,AzizGaMa15}. In the example of agents going together on a trip, a subset of items that they take is agreeable if they weakly prefer it to the complement subset of items left at home. In the resource allocation example, assuming that we allocate resources between two groups, a subset is agreeable to the first group if no agent in the group would rather switch to the second group. Suksompong showed a tight upper bound on the number of items that may need to be included in the set in order for it to be agreeable to all agents. In particular, for every additional agent, the worst-case bound increases by approximately half an item. This result is rather surprising, since agents can have very different preferences on the items, and yet we only need to pay a relatively mild cost for each extra agent to keep agreeability satisfied for the whole group. When there are two or three agents, Suksompong also gave polynomial-time algorithms that compute an agreeable subset whose size matches the worst-case bound.

While Suksompong's results are quite intriguing, some important issues were left unaddressed by his work. Firstly, in many instances, an agreeable subset of smallest size is much smaller than the worst-case bound over all instances with that number of agents and items. Indeed, an extreme example is when there is a single item that every agent likes better than all of the remaining items combined. In this case, it suffices to allocate that item. This results in a much smaller set than the worst-case bound, which is at least half of the items for any number of agents. Secondly, even if we were content with finding a subset that matches the worst-case bound, we might not be able to compute it efficiently, thus rendering the existence result impractical when the number of agents or items is large. A related issue is that of eliciting the preferences on subsets of items from the agents. Since there are an exponential number of subsets, the burden on the agents to determine their preferences and the amount of information that they need to submit to our algorithm is potentially huge. This issue can be circumvented by relying on preferences over single items or allowing the algorithm to query the agents' preferences on a need-to-know basis, as is the case for Suksompong's polynomial time algorithm for two and three agents, respectively.\footnote{Note that if the algorithm elicits the whole preference relations from the agents, this elicitation step alone already prevents the algorithm from running in time polynomial in the number of items.}

In this paper, we address all of these issues and investigate the problem of computing an agreeable subset of approximately optimal size for a given instance, as opposed to one whose size is close to the worst-case bound over all instances with that number of agents and items. We tackle the problem using several models that are well-studied in the literature and present computationally efficient algorithms for computing an agreeable subset of approximately optimal size in each of them. Moreover, in all of the models we show that our approximation bounds are virtually tight.

In Section \ref{sec:ordinal}, we assume that we only have access to the agents' ordinal preferences on single items rather than subsets of items. Models of this type offer the advantage that the associated algorithms are often simple to implement and the agents do not need to give away or even determine their entire utility functions; such models have therefore received widespread attention \cite{KohlerCh71,BouveretEnLa10,AzizGaMa15}. With only the ordinal preferences on single items in hand, however, most of the time we cannot tell whether a certain subset is agreeable to an agent or not. Nevertheless, by assuming that preferences are responsive, we can extend preferences on single items to partial preferences on subsets. We show that for any constant number of agents, there exists a subset of $\frac{m}{2}+o(m)$ items that is always agreeable as long as the full responsive preferences are consistent with the rankings over single items. Since a necessarily agreeable subset always consists of at least $\frac{m}{2}$ items even for one agent, this bound is essentially tight. We also present a simple randomized algorithm and a deterministic algorithm, both running in polynomial time, to compute such a subset.

Next, in Section \ref{sec:general}, we consider general preferences using the value oracle model \cite{FeigeMiVo11}, where the preferences of the agents are represented by utility functions and we are allowed to query the utility of an agent for any subset. We exhibit an efficient approximation algorithm with approximation ratio $O(m\ln\ln m/\ln m)$ in this model. While this may not seem impressive since the trivial algorithm that always outputs the whole set of items already achieves approximation ratio $O(m)$, we also show that our ratio is essentially the best we can hope for. In particular, there does not exist a polynomial time algorithm with approximation ratio $o(m/\ln m)$. 

Finally, in Section \ref{sec:additive}, we assume that the agents are endowed with additive utility functions. Additivity provides a reasonable tradeoff between simplicity and expressiveness; it is commonly assumed in the literature, especially recently \cite{AmanatidisMaNi15,BouveretLe16,CaragiannisKuMo16}. We show that under additive valuations, it is NP-hard to decide whether there exists an agreeable set containing exactly half of the items even where there are only two agents. On the other hand, using results on covering integer programs, we demonstrate the existence of an $O(\ln n)$-approximation algorithm for computing a minimum size agreeable set. Moreover, we show that this is tight: It is NP-hard to approximate the problem to within a factor of $(1-\delta)\ln n$ for any $\delta>0$.

\section{Preliminaries}

Let $N=\{1,2,\ldots,n\}$ denote the set of agents, and $S=\{x_1,x_2,\dots,x_m\}$ the set of items. The agents in $N$ will be collectively allocated a subset of items in $S$. Denote by $\mathcal{S}$ the set of all subsets of $S$. Each agent $i$ is endowed with a preference relation $\succeq_i$, a reflexive, complete, and transitive ordering over $\mathcal{S}$. Let $\succ_i$ denote the strict part and $\sim_i$ the indifference part of the relation $\succeq_i$. For items $x$ and $y$, we will sometimes  write $x\succeq y$ to mean $\{x\}\succeq\{y\}$. We assume throughout the paper that preferences are \emph{monotonic}, i.e., $T\cup\{x\}\succeq T$ for all $T\subseteq S$ and $x\in S$.

We are interested in when a set of items is agreeable to an agent. To this end, we must precisely define what agreeability means. The notion of agreeability, defined by Suksompong \cite{Suksompong16}, is based on the fairness concept of envy-freeness. A subset is considered to be agreeable to an agent if the agent likes it at least as much as the complement set. Put differently, if the complement set is allocated to another agent, then the former agent does not envy the latter.

\begin{definition}
\label{def:agreeable}
A subset $T\subseteq S$ of items is said to be \emph{agreeable} to agent $i$ if $T\succeq_i S\backslash T$.
\end{definition}

Next, we define a property of preferences called \emph{responsiveness}, which says that an agent cannot be worse off whenever an item is added to her set or replaced by another item that she weakly prefers to the original item. Responsiveness is a reasonable assumption in many settings and has been widely studied in the literature \cite{BramsFi00,BramsKiKl12}.

\begin{definition}
\label{def:responsive}
A preference $\succeq$ on $\mathcal{S}$ is called \emph{responsive} if it satisfies the following two conditions:
\begin{itemize}
\item $T\cup\{x\}\succeq T$ for all $T\subseteq S$ and $x\in S$ (monotonicity);
\item $T\backslash \{y\}\cup \{x\}\succeq T$ for all $T\subseteq S$ and $x,y\in S$ such that $x\succeq y$, $x\not\in T$ and $y\in T$.
\end{itemize}
\end{definition}

When preferences are responsive, it sometimes suffices to know an agent's preference on single items (i.e., the restriction of $\succeq_i$ to subsets consisting of single items) in order to deduce that the agent regards a subset as agreeable. This intuition is formalized in the next definition.

\begin{definition}
\label{def:necessaryEF}
Fix a preference $\succeq^{sing}$ on single items in $S$. A subset $T\subseteq S$ is \emph{necessarily agreeable} with respect to $\succeq^{sing}$ if $T\succeq S\backslash T$ for any responsive preference $\succeq$ on $\mathcal{S}$ consistent with $\succeq^{sing}$.
\end{definition}

The following characterization of necessary agreeability will be useful for our results in Section \ref{sec:ordinal}.

\begin{lemma}[\cite{Suksompong16}]
\label{lem:necessaryEF}
Fix a preference $\succeq^{sing}$ on single items in $S$ with \[x_1\succeq^{sing}x_2\succeq^{sing}\cdots\succeq^{sing}x_m.\] 
Let $T\subseteq S$, and define $I_k=\{x_1,x_2,\ldots,x_k\}$ for all $k\in\{1,2,\ldots,m\}$. If \[|I_k\cap T|\geq\frac{k}{2}\] for all $k\in\{1,2,\ldots,m\}$, then $T$ is necessarily agreeable with respect to $\succeq^{sing}$. The converse also holds if the preference $\succeq^{sing}$ is strict.
\end{lemma}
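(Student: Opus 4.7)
My plan is to split the argument into the forward direction (sufficiency of the prefix condition) and the converse (necessity under strictness). For the forward direction, the strategy is to construct an injection $\phi\colon S\setminus T \to T$ such that $\phi(y)\succeq^{sing} y$ for every $y\in S\setminus T$, then use responsiveness to replace the elements of $S\setminus T$ one at a time by their $\phi$-images, producing a chain of weak improvements ending at $\phi(S\setminus T)\subseteq T$; monotonicity then lifts this to $T\succeq\phi(S\setminus T)$, giving $T\succeq S\setminus T$ as required. The swap sequence is safe because $\phi$ is injective and each $\phi(y)$ lies in $T$, so the target of each swap is absent from the current set while the source is still present.

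To produce $\phi$, I would list the elements of $T$ and $S\setminus T$ in increasing index order, $T=\{x_{i_1},\ldots,x_{i_a}\}$ and $S\setminus T=\{x_{j_1},\ldots,x_{j_b}\}$, and set $\phi(x_{j_\ell})=x_{i_\ell}$. Well-definedness requires $a\geq b$, which follows at once from the hypothesis at $k=m$ (giving $|T|\geq m/2$). The real content is the inequality $i_\ell\leq j_\ell$ for all $\ell$, and this is the main obstacle. I would argue by contradiction: if $i_\ell>j_\ell$, then $I_{j_\ell}\cap T\subseteq\{x_{i_1},\ldots,x_{i_{\ell-1}}\}$, while $\{x_{j_1},\ldots,x_{j_\ell}\}\subseteq I_{j_\ell}\cap(S\setminus T)$, giving $j_\ell\geq(\ell-1)+\ell=2\ell-1$. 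Since $|I_{j_\ell}\cap T|$ is an integer, the hypothesis forces $|I_{j_\ell}\cap T|\geq\lceil j_\ell/2\rceil\geq\ell$, contradicting $|I_{j_\ell}\cap T|\leq\ell-1$.

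For the converse, assume $\succeq^{sing}$ is strict and that some $k^*$ satisfies $|I_{k^*}\cap T|<k^*/2$. The plan is to exhibit a single responsive $\succeq$ consistent with $\succeq^{sing}$ under which $S\setminus T\succ T$. Any additive utility $u$ with $u(x_1)>u(x_2)>\cdots>u(x_m)>0$ induces such a responsive preference, so it suffices to design $u$ giving $u(S\setminus T)>u(T)$. I would take $u(x_i)$ close to $1$ for $i\leq k^*$ and close to a small $\varepsilon>0$ for $i>k^*$, with arbitrarily tiny perturbations to enforce strict decrease. Writing $a=|I_{k^*}\cap T|$, the ``large'' part of $u(T)-u(S\setminus T)$ is $a-(k^*-a)=2a-k^*\leq-1$, and for $\varepsilon$ and the perturbations chosen small enough, this sign is preserved in the full sum, yielding $S\setminus T\succ T$ and completing the proof.
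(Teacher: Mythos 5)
The paper itself gives no proof of this lemma---it is imported as a black box from \cite{Suksompong16}---so there is no in-paper argument to compare against; I will assess your proof on its own. Your overall strategy is correct and is the standard one for this statement: the prefix condition $|I_k\cap T|\geq k/2$ is precisely what is needed to build an index-order injection $\phi\colon S\setminus T\to T$ with $\phi(y)\succeq^{sing}y$, after which the second clause of responsiveness handles the swaps, monotonicity handles the leftover elements of $T$, and transitivity chains everything into $T\succeq S\setminus T$; the converse under strictness is correctly witnessed by a single additive (hence responsive and, because $\succeq^{sing}$ is strict, consistent) utility concentrated on $I_{k^*}$, where $2|I_{k^*}\cap T|-k^*\leq-1$ forces $S\setminus T\succ T$ for small enough $\varepsilon$. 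The one flaw is the justification of $i_\ell\leq j_\ell$: from $|I_{j_\ell}\cap T|\leq\ell-1$ and $|I_{j_\ell}\cap(S\setminus T)|\geq\ell$ you cannot conclude $j_\ell\geq(\ell-1)+\ell$, since you are adding an \emph{upper} bound on one part of $I_{j_\ell}$ to a lower bound on the other. The inequality you need does hold, but it comes from the hypothesis: $j_\ell=|I_{j_\ell}\cap T|+|I_{j_\ell}\cap(S\setminus T)|\geq j_\ell/2+\ell$, hence $j_\ell\geq 2\ell$ and $|I_{j_\ell}\cap T|\geq j_\ell/2\geq\ell$, contradicting $|I_{j_\ell}\cap T|\leq\ell-1$. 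With that one line repaired, the proof is complete and correct.
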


\section{Ordinal Preferences on Single Items}
\label{sec:ordinal}

In this section, we assume that we only have at our disposal the agents' ordinal preferences on single items. We are interested in computing a small necessarily agreeable subset that is consistent with these preferences. If we had access to the agents' preferences over all subsets of items, it is known that we could always find a subset of size $\left\lfloor\frac{m+n}{2}\right\rfloor$ that is agreeable to all $n$ agents \cite{Suksompong16}. It is not clear, however, how much extra ``penalty'' we have to pay for the information restriction that we are imposing. It could be, for example, that with three agents there exist preferences for which we have to include up to $\frac{2m}{3}$ or $\frac{3m}{4}$ items in a necessarily agreeable subset. We show that this is in fact not the case---there always exists a necessarily agreeable subset of size $\frac{m}{2}+O(\sqrt{m\log\log m})$ as long as the number of agents is constant. 

For the proof of our first result, we will require the law of the iterated logarithm, which gives a bound for the fluctuations of a random walk.

\begin{lemma}[Law of the iterated logarithm~\cite{Khintchine24,Kolmogoroff29}]
\label{lem:iteratedlog}
Let $X_1,X_2,\ldots$ be independent and identically distributed random variables with mean $0$ and variance $1$. Let $S_n:=X_1+\dots+X_n$. Then
\[
\limsup_{n\rightarrow\infty}\frac{S_n}{\sqrt{n\log\log n}}=\sqrt{2}
\]
almost surely.
\end{lemma}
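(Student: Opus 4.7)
The plan is to establish $\limsup_{n\to\infty} S_n/\sqrt{n\lnln n}=\sqrt{2}$ almost surely by proving the two one-sided bounds separately, both via Borel--Cantelli arguments applied along a geometric subsequence.

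For the upper bound, I would fix $\epsilon>0$ and show that $\{S_n>(1+\epsilon)\sqrt{2n\lnln n}\}$ occurs for only finitely many $n$ a.s. The idea is to choose a $\rho>1$ only slightly greater than $1$, set $n_k=\lfloor\rho^k\rfloor$, and control $\max_{n_{k-1}<n\le n_k}S_n$ in one shot using Kolmogorov's maximal inequality together with a Chernoff-type exponential tail bound. If the tail estimate yields $P(\max_{n\le n_k}S_n>(1+\epsilon)\sqrt{2n_k\lnln n_k})\le(\log n_k)^{-(1+\epsilon/2)}$, then these probabilities are summable in $k$ because $\log n_k\sim k\log\rho$, and Borel--Cantelli finishes the argument. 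Sending $\rho\downarrow 1$ and $\epsilon\downarrow 0$ gives the bound $\sqrt{2}$.

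For the lower bound, I would again fix $\epsilon>0$ and show that $\{S_n\ge(1-\epsilon)\sqrt{2n\lnln n}\}$ occurs infinitely often a.s. Here I would take $\rho$ large (depending on $\epsilon$) and set $n_k=\rho^k$, so that the block increments $D_k=S_{n_k}-S_{n_{k-1}}$ are \emph{independent}. Using a matching Gaussian-type tail \emph{lower} bound (via a quantitative CLT, e.g.\ Berry--Esseen, or Cram\'er-type large-deviation estimates after truncation), one gets
\[
P\!\left(D_k>(1-\epsilon/2)\sqrt{2(n_k-n_{k-1})\lnln n_k}\right)\ge\frac{c}{\log n_k},
\]
which sums to $+\infty$. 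The second Borel--Cantelli lemma then forces infinitely many $D_k$ to be this large, and combining with the already-established upper bound applied to the previous partial sum $S_{n_{k-1}}$ (which cannot be too negative) transfers the lower bound from $D_k$ back to $S_{n_k}$. Letting $\rho\to\infty$ and $\epsilon\to 0$ yields the conclusion.

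The main obstacle will be the exponential tail estimates themselves, which must be sharp at the exact scale $\sqrt{n\lnln n}$ using only the assumption of unit variance and no higher moments. A naive Chernoff bound requires a moment generating function, and the CLT alone is not quantitative enough at these tails. The standard workaround, due to Kolmogorov, is a careful truncation: write $X_i=X_i'+X_i''$ where $X_i'$ is the truncation of $X_i$ at a level like $\sqrt{n/\lnln n}$; the bounded part has a well-controlled MGF giving the required Chernoff bound, while the contributions of the unbounded tails $X_i''$ are shown to be negligible via a separate summability estimate. Executing this truncation--and-recombination argument cleanly, and matching constants on both the upper and lower sides so that $\sqrt{2}$ emerges precisely, is the technical heart of the proof.
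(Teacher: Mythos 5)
The paper does not prove this lemma at all: it is quoted as a classical theorem with citations to Khintchine and Kolmogorov and used as a black box in the proof of Theorem~\ref{thm:necessaryrandomized}. So there is no proof in the paper to compare against; the only question is whether your outline would actually close.

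Your skeleton is the standard one (first Borel--Cantelli along $n_k=\lfloor\rho^k\rfloor$ with $\rho\downarrow1$ for the upper bound, second Borel--Cantelli on the independent block increments $S_{n_k}-S_{n_{k-1}}$ with $\rho$ large for the lower bound, a maximal inequality, and matching exponential upper and lower tail estimates), and that structure is correct. The genuine gap is in the step you yourself flag as the technical heart, and it is larger than you suggest. Under the stated hypotheses --- i.i.d.\ with mean $0$ and variance $1$ and \emph{no} boundedness or higher moments --- the result is the Hartman--Wintner theorem, not Kolmogorov's LIL, and the truncation you propose does not close as described: truncating at level $\sqrt{n/\lnln n}$ and disposing of the tails ``via a separate summability estimate'' fails, because $\E[X^2]<\infty$ guarantees $\sum_n \Pr[|X|>\epsilon\sqrt{n}]<\infty$ but \emph{not} $\sum_n \Pr[|X|>\sqrt{n/\lnln n}]<\infty$ (take, e.g., $\Pr[X^2>y]\sim c/\bigl(y\ln y\,(\lnln y)^{3/2}\bigr)$, which has finite mean yet makes the latter series diverge). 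Kolmogorov's exponential bounds need the summands bounded by $o(\sqrt{n/\lnln n})$, which is exactly the regime where that tail series can diverge, so the recombination step requires the genuinely more delicate Hartman--Wintner machinery (or a modern substitute such as strong approximation/Skorokhod embedding or de Acosta's proof). A second, smaller issue: the intermediate bound $\Pr[\max_{n\le n_k}S_n>(1+\epsilon)\sqrt{2n_k\lnln n_k}]\le(\log n_k)^{-(1+\epsilon/2)}$ cannot come from Kolmogorov's $L^2$ maximal inequality, which only decays polynomially in the threshold; you need the L\'evy--Ottaviani reflection-type inequality to transfer the exponential tail of $S_{n_k}$ to the running maximum. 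None of this affects the paper, which correctly treats the lemma as known.
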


\begin{theorem}
\label{thm:necessaryrandomized}
If the number of agents is constant, then there exists a subset of size $\frac{m}{2}+O(\sqrt{m\log\log m})$ that is necessarily agreeable with respect to the preferences on single items of all agents.
\end{theorem}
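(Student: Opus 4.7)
The plan is to use Lemma~\ref{lem:necessaryEF} to reduce the problem to a purely combinatorial one, solve it with a simple random construction (each item kept independently with probability $1/2$), control the resulting per-agent deviations via the LIL, and then patch any remaining deficits by adding the top-ranked items of each agent.

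More concretely, I would first observe that by Lemma~\ref{lem:necessaryEF} it suffices to construct $T\subseteq S$ such that $|I_k^{(i)}\cap T|\geq k/2$ holds for every agent $i$ and every $k\in\{1,\ldots,m\}$, where $I_k^{(i)}$ denotes the top $k$ items in agent $i$'s ranking. To obtain such a $T$, let $R\subseteq S$ be the random set produced by including each item independently with probability $1/2$, and for each agent $i$ define the random walk $Z_k^{(i)}=\sum_{j=1}^{k}\bigl(2\mathbf{1}[x\in R]-1\bigr)$ taken along $i$'s ranking, so that $|I_k^{(i)}\cap R|=(k+Z_k^{(i)})/2$. Extending the walk to an infinite i.i.d.\ sequence and applying Lemma~\ref{lem:iteratedlog} shows that almost surely $|Z_k^{(i)}|\le 2\sqrt{k\lnln k}$ for all sufficiently large $k$; by standard arguments this yields, for any $\epsilon>0$, a constant $C$ such that
\[
\Pr\!\left[\max_{k\le m} |Z_k^{(i)}| \le C\sqrt{m\log\log m}\right] \ge 1-\epsilon
\]
for all large enough $m$. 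Taking $\epsilon=1/(4n)$ and union bounding over the $n=O(1)$ agents (and also using that $|R|\le m/2+\sqrt{m}$ with high probability by Chebyshev), there is a positive probability that $R$ simultaneously satisfies $|R|\le m/2+\sqrt{m}$ and $\max_{k\le m}|Z_k^{(i)}|\le C\sqrt{m\lnln m}$ for every $i$. Fix such an~$R$ and set $D=\lceil C\sqrt{m\lnln m}\rceil$.

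Now define $T=R\cup B_1\cup\cdots\cup B_n$, where $B_i$ is the set of the top $4D$ items in agent $i$'s ranking. I would verify agreeability by splitting on $k$: if $k\le 4D$ then $I_k^{(i)}\subseteq B_i\subseteq T$, so $|I_k^{(i)}\cap T|=k\ge k/2$; otherwise $B_i\subseteq I_k^{(i)}$, and by inclusion–exclusion
\[
|I_k^{(i)}\cap T| \ge |I_k^{(i)}\cap R| + 4D - |R\cap B_i|.
\]
The bound on $|Z_k^{(i)}|$ gives $|I_k^{(i)}\cap R|\ge k/2-D$, and applied at index $4D$ it also gives $|R\cap B_i|=|I_{4D}^{(i)}\cap R|\le 2D+D=3D$. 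Substituting yields $|I_k^{(i)}\cap T|\ge (k/2-D)+4D-3D=k/2$, as required. Finally, $|T|\le |R|+\sum_{i=1}^{n}|B_i|\le m/2+\sqrt{m}+4nD=m/2+O(\sqrt{m\log\log m})$, since $n$ is constant.

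The step I expect to require the most care is the conversion of the $\limsup$ statement in Lemma~\ref{lem:iteratedlog} into the finite-horizon probabilistic bound used above: the LIL itself is asymptotic, and one has to argue (via Egorov-style reasoning on the infinite walk) that the event $\{\max_{k\le m}|Z_k^{(i)}|\le C\sqrt{m\lnln m}\}$ has probability bounded away from zero for every sufficiently large $m$. Once that bound is in hand, the construction, the case analysis, and the size accounting are all routine, and the restriction to constant~$n$ enters only through the final union bound and the $nD$ term.
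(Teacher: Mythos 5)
Your proposal is correct and follows essentially the same route as the paper's proof: reduce to the prefix condition of Lemma~\ref{lem:necessaryEF}, include each item independently with probability $1/2$, invoke the law of the iterated logarithm plus a union bound over the constantly many agents to control all prefix deviations by $O(\sqrt{m\log\log m})$, and then patch by adding each agent's top-ranked items. Your version of the patching step (adding the top $4D$ items and verifying the prefix inequality by a two-case analysis) is in fact spelled out more carefully than the paper's, which simply adds each agent's most preferred $\sqrt{m\log\log m}$ excluded items and asserts nonnegativity of the walks.
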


We remark that when there are two agents, Suksompong's algorithm computes a necessarily agreeable subset of size $\left\lfloor\frac{m}{2}\right\rfloor+1$, which is also the optimal bound for an agreeable subset for two agents even if we know their full preferences.

\begin{proof}
Let $X_1,X_2,\ldots,X_m$ be independent random variables taking values $1$ or $-1$. We will take $X_i=1$ to mean that item $x_i$ is included in our subset, and $X_i=-1$ to mean that it is excluded from the subset. 

Consider an arbitrary agent $j$. Suppose that she ranks the single items as \[x_{\sigma_j(1)}\succeq_j x_{\sigma_j(2)}\succeq_j\dots\succeq_j x_{\sigma_j(m)}.\] Using Lemma \ref{lem:necessaryEF}, we find that our subset is necessarily agreeable for the agent if \[X_{\sigma_j(1)}+\dots+X_{\sigma_j(i)}\geq 0\] for all $1\leq i\leq m$.

Let $S_i^j:=X_{\sigma_j(1)}+\dots+X_{\sigma_j(i)}$, and independently set each $X_i$ to be $1$ or $-1$ with probability $\frac{1}{2}$ each. Lemma \ref{lem:iteratedlog} implies that for large enough $k$ (and $m$), we have $S_i^j\leq 2\sqrt{i\log\log i}$ (and by symmetry, $-S_i^j\geq-2\sqrt{i\log\log i}$) for all $i\geq k$ with high probability. This means that for large enough $m$, $|S_i^j|\leq 2\sqrt{m\log\log m}$ for all $1\leq i\leq m$ with high probability as well. 

Since $n$ is constant, by the union bound we find that for sufficiently large $m$, we can initialize the random variables $X_1,\dots,X_m$ so that $|S_i^j|\leq2\sqrt{m\log\log m}$ for all $1\leq i\leq m$ and all $j$. We will modify the choice of $X_i$'s slightly. For each agent, include her most preferred $\sqrt{m\log\log m}$ items that have so far been excluded. Thus, we have $S_i^j\geq 0$ for all $1\leq i\leq m$ and all $j$, and our selected subset includes at most 
\[\frac{m}{2}+(n+1)\cdot\sqrt{m\log\log m}=\frac{m}{2}+O(\sqrt{m\log\log m})\] 
items, as desired.
\end{proof}

Theorem \ref{thm:necessaryrandomized} yields a simple  randomized polynomial time algorithm that finds a necessarily agreeable subset for multiple agents by first choosing independently and uniformly at random whether to include each item, and then modifying the selection by including the most preferred $\sqrt{m\log\log m}$ items of each agent that have been excluded.

Next, we present a deterministic polynomial time algorithm that finds a necessarily agreeable subset of size $m/2+o(m)$ for all agents. We will need the following classical result in combinatorics.

\begin{lemma}[\cite{ErdosSz35}]
\label{lem:longestsubsequence}
Any sequence of at least $(r-1)(s-1)+1$ distinct real numbers contains either an increasing subsequence of length $r$ or a decreasing subsequence of length $s$.
\end{lemma}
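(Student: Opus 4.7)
The plan is to establish the lemma by a classical pigeonhole argument, attaching to each term of the sequence a pair of positive integers that must differ across terms. Write the sequence as $a_1, a_2, \ldots, a_N$ with $N \geq (r-1)(s-1)+1$, and for each index $i$ let $u_i$ be the length of the longest increasing subsequence ending at position $i$, and $v_i$ the length of the longest decreasing subsequence ending at position $i$. Both quantities are at least $1$, witnessed by the singleton $(a_i)$.

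The first step is to show that the map $i \mapsto (u_i, v_i)$ is injective on $\{1, 2, \ldots, N\}$. Fix indices $i < j$. Since the terms of the sequence are distinct, either $a_i < a_j$ or $a_i > a_j$. In the former case, taking any increasing subsequence of length $u_i$ that ends at $a_i$ and appending $a_j$ produces an increasing subsequence of length $u_i + 1$ ending at $a_j$, so $u_j \geq u_i + 1 > u_i$. In the latter case, the analogous construction with a longest decreasing subsequence ending at $a_i$ yields $v_j > v_i$. In either situation $(u_i, v_i) \neq (u_j, v_j)$.

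The second step is the pigeonhole count. Assume for contradiction that the sequence contains neither an increasing subsequence of length $r$ nor a decreasing subsequence of length $s$. Then $u_i \in \{1, 2, \ldots, r-1\}$ and $v_i \in \{1, 2, \ldots, s-1\}$ for every $i$, so the pairs $(u_i, v_i)$ lie in a set of cardinality at most $(r-1)(s-1)$. Since the map $i \mapsto (u_i, v_i)$ is injective and the domain has size $N \geq (r-1)(s-1)+1$, this is impossible, completing the proof.

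There is no real obstacle here; the argument is short, entirely combinatorial, and the only point requiring care is the injectivity claim, which relies on the distinctness hypothesis precisely to force a strict comparison $a_i < a_j$ or $a_i > a_j$. Everything else is the standard pigeonhole bookkeeping.
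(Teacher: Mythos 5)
Your proof is correct and complete. Note that the paper does not actually prove this lemma---it is the classical Erd\H{o}s--Szekeres theorem, cited from the 1935 original and used as a black box in the proof of Theorem~\ref{thm:necessarydeterministic}---so there is no in-paper argument to compare against. What you have written is the standard pigeonhole proof (often attributed to Seidenberg): the injectivity of $i \mapsto (u_i, v_i)$ is argued correctly, the distinctness hypothesis is invoked exactly where it is needed to force a strict comparison, and the counting step is right. This is shorter and more transparent than the original Erd\H{o}s--Szekeres induction, and it is the proof one would include if the paper chose to make the lemma self-contained.
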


\begin{theorem}
\label{thm:necessarydeterministic}
If the number of agents is constant, then there exists a deterministic algorithm, running in time polynomial in the number of items, that computes a subset of size $\frac{m}{2}+o(m)$ that is necessarily agreeable with respect to the preferences on single items of all agents.
\end{theorem}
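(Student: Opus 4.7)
The plan is to derandomize Theorem~\ref{thm:necessaryrandomized} by producing a structured partition of $S$ and assigning each item to $T$ or $S\setminus T$ deterministically along that partition. Call a subset $C\subseteq S$ equipped with a fixed internal order a \emph{chain} if, for every agent $j$, the restriction of $\succeq_j$ to $C$ coincides with the internal order or with its reverse. I would first build a chain partition $\mathcal{P}$ of $S$ of size $K=o(m)$, then alternate $\pm 1$ inside each chain, and finally patch prefix violations by top-loading each agent's favourite items, mirroring the closing step of Theorem~\ref{thm:necessaryrandomized}.

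To build $\mathcal{P}$, I would induct on the agent index. Start with $\mathcal{P}_1=\{S\}$, with internal order $\succeq_1$. Given $\mathcal{P}_j$, refine each $C\in\mathcal{P}_j$ by viewing its items, in internal order, as a real sequence $y_1,\ldots,y_{|C|}$, where $y_i$ is the rank of the $i$th item under $\succeq_{j+1}$. Repeatedly extract a longest monotone (increasing or decreasing) subsequence of what remains; by Lemma~\ref{lem:longestsubsequence}, such a subsequence has length at least $\sqrt{|C'|}$ on the current residual $C'$, so the loop terminates in $O(\sqrt{|C|})$ rounds and splits $C$ into that many sub-chains, each monotone under $\succeq_{j+1}$ and still inheriting the chain property for $\succeq_1,\ldots,\succeq_j$. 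Each extraction runs in $O(|C|\log|C|)$ time via patience sorting. Cauchy--Schwarz applied to $\sum_{C\in\mathcal{P}_j}\sqrt{|C|}$ with constraint $\sum_C|C|=m$ gives $|\mathcal{P}_{j+1}|\le O\!\bigl(\sqrt{|\mathcal{P}_j|\cdot m}\bigr)$, and unrolling the recurrence yields $K=|\mathcal{P}_n|=O\!\bigl(m^{\,1-2^{-(n-1)}}\bigr)=o(m)$ for any fixed $n$.

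Next I would define $T_0$ by alternately including and excluding items along each chain's internal order. Because $I_t^j\cap C$ is a prefix or a suffix of $C$ whenever $C$ is monotone under $\succeq_j$, the alternating pattern contributes a signed excess of absolute value at most $1$ per chain, so $2|I_t^j\cap T_0|-t\ge -K$ for every agent $j$ and every $t$. To clear the residual deficit, for each agent $j$ I would add to $T_0$ the top $K$ items of $\succeq_j$ that are currently excluded; a short calculation, analogous to the final lines of the proof of Theorem~\ref{thm:necessaryrandomized}, shows that this raises every prefix sum for agent $j$ by at least $K$ while never decreasing any prefix sum for any other agent (flipping $-1\mapsto +1$ is monotone in all prefix sums simultaneously). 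Lemma~\ref{lem:necessaryEF} then certifies that the resulting set $T$ is necessarily agreeable to every agent, and $|T|\le m/2+O(K)+nK=m/2+o(m)$.

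The main obstacle will be the partition step. Lemma~\ref{lem:longestsubsequence} only controls one agent at a time, and a uniform bound $\sqrt{|C|}\le\sqrt{m}$ used inside every chain would give $|\mathcal{P}_{j+1}|\le|\mathcal{P}_j|\sqrt{m}$ and hence $K=\Omega(m)$ already at $n=3$; exploiting the constraint $\sum_C|C|=m$ through Cauchy--Schwarz is what keeps $K$ sub-linear for every constant $n$. Past the partition, the remainder is the same prefix-sum bookkeeping already performed in the proof of Theorem~\ref{thm:necessaryrandomized}.
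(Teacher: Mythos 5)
Your proposal is correct and follows essentially the same route as the paper: both use Lemma~\ref{lem:longestsubsequence} to decompose the items into $o(m)$ chains that are simultaneously monotone (up to reversal) for every agent, and then certify necessary agreeability through the prefix condition of Lemma~\ref{lem:necessaryEF}. The differences are only organizational---the paper repeatedly peels a single common chain of size $k^{1/2^{n-1}}$ from the $k$ remaining items and selects half of it exactly, combining the pieces via a disjoint-union observation, whereas you build the full chain partition by agent-by-agent refinement with a Cauchy--Schwarz count and then repair the $O(K)$ prefix deficit with the top-loading patch from Theorem~\ref{thm:necessaryrandomized}---and both yield the same $m^{1-2^{-(n-1)}}$-type overhead.
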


\begin{proof}
We first observe that when there are $m$ items and two agents whose preferences on these items are the opposite of each other, we can choose a subset of size at most $\frac{m}{2}+1$ that is necessarily agreeable to both agents. To see this, suppose that the preferences on single items of the two agents are \[x_1\succeq_1 x_2\succeq_1\dots\succeq_1 x_m\] and \[x_m\succeq_2 x_{m-1}\succeq_2\dots\succeq_2 x_1.\] If $m$ is odd, it suffices to choose items $x_1,x_3,x_5,\dots,x_m$, while if $m$ is even, it suffices to choose items $x_1,x_3,x_5,\dots,x_{m-1}$ along with item $x_m$.

Now, consider our group of $n$ agents. Assume without loss of generality that the preference on single items of agent 1 is \[x_1\succeq_1 x_2\succeq_1\dots\succeq_1 x_m.\] Using Lemma \ref{lem:longestsubsequence}, from the preference on single items of agent 2, we can find either an increasing subsequence or a decreasing subsequence of length $\sqrt{m}$ on the indices of the items. Applying Lemma \ref{lem:longestsubsequence} again, we find a subsequence of this subsequence of length $m^{1/4}$ that is either increasing or decreasing in the preference on single items of agent 3. Proceeding in this manner, we find a subsequence of $m^{1/2^{n-1}}$ items whose indices appear either in that order or in the reverse order in every agent's preference on single items. By our observation above, we can choose a subset of at most $\frac{1}{2} m^{1/2^{n-1}}+1$ items that is necessarily agreeable for every agent with respect to this set of $m^{1/2^{n-1}}$ items.

Let $t:=2^{n-1}$. Note that if a set $A$ of items is necessarily agreeable when the universe of items is taken to be $B\supseteq A$, and another set $C$ of items is necessarily agreeable with respect to the same preference on single items when the universe of items is $D\supseteq C$ disjoint from $B$, then $A\cup C$ is necessarily agreeable with respect to that preference when the universe of items is $B\cup D$. To obtain our necessarily agreeable subset for all agents, we proceed as follows. When there are $k$ items left, we choose a subset of $k^{\frac{1}{t}}$ items as above. Within that subset, we choose a subset of at most $\frac{1}{2} k^{\frac{1}{t}}+1$ items that is necessarily agreeable for all agents with respect to the $k^{\frac{1}{t}}$ items, and we remove the $k^{\frac{1}{t}}$ items from consideration. In the first step, we decrease the number of items by a factor of $1-1/m^{(t-1)/t}$, and this factor only decreases in subsequent steps. Hence the number of steps is at most
\[\frac{\log\frac{1}{m}}{\log\left(1-\frac{1}{m^{(t-1)/t}}\right)}\approx\frac{\log m}{\frac{1}{m^{(t-1)/t}}}=m^{\frac{t-1}{t}}\log m,\]
which implies that our chosen subset exceeds half of the items by $o(m)$ items. Moreover, since each step involves finding a longest increasing or decreasing subsequence from a list of length at most $m$ for a constant number of agents, the algorithm runs in polynomial time.
\end{proof}

Although the algorithm in Theorem~\ref{thm:necessarydeterministic} has the advantage of being deterministic, the expected number of repetitions of the randomized algorithm in Theorem~\ref{thm:necessaryrandomized} is very low; in fact, this value is roughly 1 since the algorithm succeeds with high probability. Therefore, we think that the algorithm from Theorem 1 should be preferred in general due to its speed and ease of implementation as well as its superior guarantee.

\section{General Preferences}
\label{sec:general}

While our algorithms from the previous section always find an agreeable set of size at most $\frac{m}{2} + o(m)$, it is unclear how small this set is compared to the optimal if we have information beyond preferences on single items. In other words, our results so far do not yield any guarantee on the approximation ratio beyond the obvious $O(m)$ upper bound for arbitrary preferences over subsets of items. The goal of this section is to explore the possibilities and limitations of achieving better approximation ratios in this general setting.

Before we move on to our results, let us be more precise about the model we are working with. First, since each agent's preference is reflexive, complete and transitive, there is a utility function $u_i: \mathcal{S} \rightarrow [0, 1]$ such that $T \succeq_i T'$ if and only if $u_i(T) \geq u_i(T')$; for convenience, we work with these utility functions instead of working directly with the preferences themselves. Since the number of subsets in $\mathcal{S}$ is exponentially large, the utility functions take exponential space to write down. Hence, it is undesirable to include them as part of the input. Instead, we will work with the \emph{value oracle model}~\cite{FeigeMiVo11}, in which the algorithm can query $u_1(T), \dots, u_n(T)$ for each subset $T \subseteq S$.\footnote{When a polynomial number of queries are allowed, it is not hard to see that the value oracle model can be simulated with the preference oracle model (e.g., \cite{Suksompong16}), in which the algorithm is allowed to query the relative preference of an agent between any two subsets, and vice-versa.} Finally, we note that we do not assume responsiveness of the agents' preferences in this section.

Our first result is a simple polynomial time approximation algorithm with approximation ratio $O(m \lnln m / \ln m)$. Even though this approximation guarantee is only $\Omega(\ln m / \lnln m)$ better than the obvious $O(m)$ bound, we will see later that this is almost the best one can hope for in polynomial time.

\begin{theorem} \label{thm:approx-general}
For any constant $\varepsilon > 0$, there exists a deterministic $(\varepsilon m \lnln m / \ln m)$-approximation algorithm for finding a minimum size agreeable set that runs in time polynomial in the number of agents and items.
\end{theorem}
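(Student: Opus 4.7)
The plan is a simple block-partitioning and enumeration scheme. For $m$ smaller than some constant threshold depending on $\varepsilon$, the algorithm would just brute-force over all $2^m$ subsets. For larger $m$, I would set $L := \lceil \ln m / (\varepsilon' \lnln m) \rceil$ with $\varepsilon' := \varepsilon/2$, and partition $S$ arbitrarily into blocks $B_1, \ldots, B_L$ each of size at most $\lceil m/L \rceil$. The algorithm then iterates over every subfamily $J \subseteq \{1, \ldots, L\}$, forms $T_J := \bigcup_{\ell \in J} B_\ell$, issues $2n$ oracle queries to check whether $u_i(T_J) \geq u_i(S \setminus T_J)$ for every agent $i$, and returns the smallest agreeable $T_J$ encountered (with $S$ itself as a safety fallback, which is always agreeable by monotonicity).

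To bound the approximation ratio, the key observation is that agreeability is closed under taking supersets: if $T \subseteq T'$ and $T$ is agreeable, then monotonicity gives $u_i(T') \geq u_i(T) \geq u_i(S \setminus T) \geq u_i(S \setminus T')$, so $T'$ is agreeable as well. Let $T^*$ be an optimal agreeable set of size $k^*$, and define $J^* := \{\ell : B_\ell \cap T^* \neq \emptyset\}$. Since $T^*$ has only $k^*$ items, $|J^*| \leq k^*$, and $T_{J^*} \supseteq T^*$ is therefore agreeable with $|T_{J^*}| \leq k^* \lceil m/L \rceil$. Because the algorithm considers $T_{J^*}$ as a candidate, the output has size at most $k^* \lceil m/L \rceil$, yielding a ratio of at most $\lceil m/L \rceil \leq \varepsilon' m \lnln m / \ln m + 1$, which is at most $\varepsilon m \lnln m / \ln m$ once $m$ is sufficiently large in terms of $\varepsilon$.

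For the running time, I would verify that $2^L \leq 2 \cdot m^{(\ln 2)/(\varepsilon' \lnln m)} = m^{o(1)}$, while each agreeability check uses only $2n$ queries, so the algorithm terminates in time polynomial in $n$ and $m$.

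I do not anticipate any substantive obstacle. The only content-bearing step is the observation that agreeability is upward-closed under the paper's monotonicity assumption, which is exactly what allows the block rounding to preserve feasibility when we inflate $T^*$ to $T_{J^*}$; everything else reduces to elementary size counting and asymptotic estimation to confirm the bound on $\lceil m/L\rceil$ and on $2^L$.
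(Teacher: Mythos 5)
Your proposal is correct, and it shares the central insight with the paper's proof---agreeability is upward-closed under monotonicity, so it suffices to test a polynomial-size family of block unions guaranteed to contain a superset of any small optimal set---but your candidate family and analysis are organized differently. The paper partitions $S$ into roughly $(\ln m/(\varepsilon\lnln m))^2$ small blocks and, invoking a covering-design construction of Rees et al.\ (Lemma~\ref{lem:covering-design}), tests the $\binom{\lceil m/p\rceil}{q}$ unions of exactly $q$ blocks; every set of size at most $q=\lfloor \ln m/(\varepsilon\lnln m)\rfloor$ lies in some candidate, and the ratio then follows from a case split on whether the optimum has size at most $q$ (output of size $pq$) or more than $q$ (output $S$, ratio $m/(q+1)$). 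You instead use only $L\approx \ln m/(\varepsilon'\lnln m)$ large blocks and enumerate all $2^L=m^{o(1)}$ unions; since an optimal set of size $k^*$ meets at most $k^*$ blocks, the candidate $T_{J^*}$ certifies the bound $k^*\lceil m/L\rceil$ directly, with no case analysis and no external combinatorial lemma. Your version is more elementary and even yields the slightly stronger per-instance guarantee $\mathrm{OPT}\cdot\lceil m/L\rceil$ rather than only the worst-case ratio; the covering-design formulation is what one reaches for if all candidates must have a fixed prescribed size, but it buys nothing extra here. The steps you flag as routine (the bound $\lceil m/L\rceil\le \varepsilon m\lnln m/\ln m$ for large $m$, the $m^{o(1)}$ bound on $2^L$, and brute-forcing small $m$) do check out.
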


Our algorithm works as follows. First, we query the agents' utilities on $(\ln m / \lnln m)$-size subsets $T_1, \dots, T_{\poly(m)}$, which are to be specified. If any of these subsets are agreeable, then we output one such subset. Otherwise, we output the whole set $S$.

To argue about the approximation ratio of the algorithm, we need to show that, if there is a small agreeable subset $T^* \subseteq S$ (of size $O(\ln m / \lnln m)$), then at least one of $T_1, \dots, T_{\poly(m)}$ is agreeable. A sufficient condition for this is that every subset $T \subseteq S$ of small size is contained in at least one $T_i$. A similar question has been studied before in combinatorics under the name \emph{covering design}~\cite{GordonPaKu96}. Here we will use a construction by Rees et al., stated formally below.

\begin{lemma}[\cite{ReesStWe99}] \label{lem:covering-design}
For any set $S$ with $|S|=m$ and any positive integers $p, q$ such that $pq \leq m$, there exists a deterministic algorithm that outputs subsets $T_1, \dots, T_{\binom{\lceil m / p \rceil}{q}} \subseteq S$ of size $pq$ such that, for any subset $T \subseteq S$ of size at most $q$, $T$ is contained in $T_i$ for some $i$. Moreover, the running time of the algorithm is polynomial in $m$ and $\binom{\lceil m / p \rceil}{q}$.
\end{lemma}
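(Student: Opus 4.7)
The plan is to construct the required family explicitly via a block-partition scheme. Set $k := \lceil m/p \rceil$. First I would consider the idealized case $m = pk$: partition $S$ into $k$ disjoint blocks $B_1,\dots,B_k$, each of size exactly $p$, and for every $q$-element subset $J \subseteq \{1,\dots,k\}$ define $T_J := \bigcup_{i \in J} B_i$. This yields $\binom{k}{q}$ subsets of $S$, each of size exactly $pq$.

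The covering property would then follow almost immediately. Given any $T \subseteq S$ with $|T| \leq q$, let $J(T) := \{i \in [k] : T \cap B_i \neq \emptyset\}$. Because the $B_i$ are disjoint and $|T| \leq q$, we have $|J(T)| \leq q$; extending $J(T)$ arbitrarily to some $q$-subset $J \subseteq [k]$ (possible since $k \geq q$, which follows from $pq \leq m$) yields $T \subseteq T_J$. Building the family only requires enumerating the $\binom{k}{q}$ subsets and writing each $T_J$, so the running time is $O\bigl(pq \cdot \binom{k}{q}\bigr)$, polynomial in $m$ and $\binom{k}{q}$ as required.

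The main obstacle is the case $m \not\equiv 0 \pmod{p}$, in which $S$ cannot be split evenly into $k$ blocks of size $p$; a naive partition forces the last block to be short, so any $T_J$ including it would be too small. To handle this I would temporarily extend $S$ with $pk - m < p$ \emph{dummy} elements to enable a size-$p$ partition, form the candidate sets $T_J$ as above, and then repair each $T_J$ by replacing every dummy it contains with some element of $S$ not already present in $T_J$. Such replacements exist because $T_J$'s genuine content occupies at most $pq - d_J$ elements of $S$ (where $d_J$ denotes its dummy count), leaving $m - (pq - d_J) \geq d_J$ unused elements of $S$ available for swapping---an inequality equivalent to the hypothesis $pq \leq m$. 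Since the repair step only adds elements of $S$ to each $T_J$ without deleting any real element, the covering guarantee is preserved for every $T \subseteq S$ (which never contained dummies to begin with), and each final $T_J$ is a genuine subset of $S$ of size exactly $pq$.
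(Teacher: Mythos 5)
Your construction is correct and is essentially identical to the paper's: both partition $S$ into $\lceil m/p\rceil$ blocks, take unions of $q$ blocks, and pad any undersized union up to exactly $pq$ elements (the paper adds arbitrary elements directly, while you phrase this via dummies that are swapped out, which amounts to the same thing). No substantive difference.
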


%For completeness, we give a simple proof of Theorem~\ref{thm:covering-design} in Appendix~\ref{app:covering-design}. With this theorem in place, we are ready to prove Theorem~\ref{thm:approx-general}.

We give a short proof of Lemma~\ref{lem:covering-design}, which is taken from Rees et al.'s work but modified with the desired range of the parameters.

\begin{proof}[Proof of Lemma~\ref{lem:covering-design}]
First, partition $S$ into $\lceil m/p \rceil$ parts $S_1, \dots, S_{\lceil m / p \rceil}$ where each part is of size at most $p$. Each set $T_i$ is simply a union of $q$ different $S_j$'s. Note that if the union is of size smaller than $pq$, we can simply add arbitrary elements of $S$ into it to make its size exactly $pq$. Clearly, there are $\binom{\lceil m / p \rceil}{q}$ such unions and $T_1, \dots, T_{\binom{\lceil m / p \rceil}{q}}$ satisfy the properties required in the theorem.
\end{proof}

With this lemma in place, we are ready to prove Theorem~\ref{thm:approx-general}.

\begin{proof}[Proof of Theorem~\ref{thm:approx-general}]
Our algorithm starts by evoking the algorithm from Lemma~\ref{lem:covering-design} with $q = \lfloor \ln m / (\varepsilon \lnln m) \rfloor$ and $p = \lfloor \varepsilon m \lnln m / (q\ln m) \rfloor$ to produce subsets $T_1, \dots, T_\ell \subseteq S$ of size $pq$ where $\ell = \binom{\lceil m / p \rceil}{q}$. For each $i = 1, \dots, \ell$, we query $u_1(T_i), \dots, u_n(T_i), u_1(S \setminus T_i), \dots, u_n(S \setminus T_i)$ to check whether $T_i$ is agreeable. If any $T_i$ is agreeable, we output it. Otherwise, output the whole set $S$. Clearly, the output set is always agreeable.

Moreover, observe that, from our choice of $p, q$, we have 
\begin{align*}
\ell = \binom{\lceil m / p \rceil}{q} 
     &\leq \left(e \lceil m / p \rceil / q\right)^q \\
     &= (O(\ln m))^q \\
     &= \exp(O(q \lnln m)) \\
     &= \exp(O(\ln m)), 
\end{align*}
which is polynomial in $m$. Hence, the running time of our algorithm is polynomial in $m$ and $n$.

To prove the algorithm's approximation guarantee, let us consider two cases. First, if the optimal agreeable set has size more than $q$, then since we output a set of size at most $m$, we obtain an approximation ratio of at most $m / (q + 1) \leq \varepsilon m \lnln m / \ln m$.

On the other hand, if the optimal agreeable set $T^*$ has size at most $q$, then by construction, there exists $i$ such that $T^* \subseteq T_i$. Since $T^*$ is agreeable, $T_i$ is also agreeable. Hence, our output set has size $pq \leq \varepsilon m \lnln m / \ln m$, which implies an approximation ratio of $\varepsilon m \lnln m / \ln m$ as well.
\end{proof}

While our algorithm may seem rather naive, we will show next that its approximation guarantee is, up to $O(\lnln m)$ factor, essentially the best one can hope for, even when there is only one agent:

\begin{theorem} \label{thm:inapprox-general}
For every constant $c> 0$ and every sufficiently large $m$ (depending on $c$), there is no (possibly randomized and adaptive) algorithm that makes at most $m^{c/8}$ queries and always outputs an agreeable set with expected size at most $m / (c \ln m)$ times the optimum, even when there is only one agent.
\end{theorem}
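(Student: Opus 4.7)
The plan is to apply Yao's minimax principle with an appropriate distribution of hard instances. Sample a uniformly random $k$-element subset $T^*\subseteq S$ for $k=\Theta(\ln m/c)$, and consider the monotone $[0,1]$-valued utility
\[
u_{T^*}(T) \;=\; \mathbf{1}[T\supseteq T^*] \;+\; (1-\mathbf{1}[T\supseteq T^*])\cdot\frac{|T|}{m}.
\]
Then $\mathrm{OPT}=k$: the set $T^*$ is agreeable, while every agreeable set of size below $m/2$ must be a superset of $T^*$. By Yao it suffices to show that no deterministic algorithm using at most $q=m^{c/8}$ queries and always outputting an agreeable set has expected output size over this distribution below $(m/(c\ln m))\cdot k$.

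I would then view such an algorithm as a binary decision tree of depth $q$, since each query response is one of two possible values ($1$ or $|T_i|/m$). At every leaf $\ell$ with output $T_\ell$ and consistent set $\mathcal{C}_\ell\subseteq\binom{S}{k}$, agreeability of $T_\ell$ for all $T^*\in\mathcal{C}_\ell$ forces either the \emph{safe small} regime $T_\ell\supseteq T^*$ for every $T^*\in\mathcal{C}_\ell$ (so $|\mathcal{C}_\ell|\leq\binom{|T_\ell|}{k}$), or the \emph{safe large} regime $|T_\ell|\geq m/2$. I would assemble the lower bound by separating these contributions: ``safe large'' leaves contribute at least $m/2$ each, and I would upper-bound the probability mass of $T^*$ routed to ``safe small'' leaves whose output size is below the target $mk/(c\ln m)$.

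The crucial step is the mass bound on small ``safe small'' leaves. A naive estimate $\sum|\mathcal{C}_\ell|\leq 2^q\binom{mk/(c\ln m)}{k}$ is too weak (since $2^q$ is astronomically large), so the refined argument must track how much information about $T^*$ is actually extracted by $q$ queries. Each binary response carries at most one bit, but for a uniformly random $T^*$ the event ``$T_i\supseteq T^*$'' is rare except when $|T_i|$ is very close to $m$, so the effective information gain per query is sublogarithmic for most query sizes; the argument must compare this total extractable information against the entropy $\Theta(k\log(m/k))$ of $T^*$, and calibrate $k$ so that the resulting output-size lower bound matches $(m/(c\ln m))\cdot k$.

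The main obstacle I anticipate is handling adaptive queries with $|T_i|$ close to $m$, which in principle yield balanced, high-entropy binary responses and thus allow a ``binary search''--style narrowing of $T^*$. Ruling this out requires carefully exploiting the structural asymmetry between the two responses in our utility function and bounding the combined discriminating power of the $q$ queries. The exponent $c/8$ in $q=m^{c/8}$ should arise naturally from this quantitative balancing, with a constant like $8$ reflecting a slack factor needed to combine the entropy inequality with the counting bound on ``safe small'' leaves.
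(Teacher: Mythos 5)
There is a genuine gap, and it is fatal to your construction rather than a technical detail to be worked out. The obstacle you flag at the end --- queries $T_i$ with $|T_i|$ close to $m$ yielding balanced, high-entropy responses --- cannot be ``ruled out'': it is a real attack that solves your hard distribution exactly. Under your utility $u_{T^*}$, querying $T = S \setminus A$ returns $1$ if and only if $A \cap T^* = \emptyset$, so the oracle answers arbitrary intersection queries against $T^*$. This is precisely adaptive group testing: a standard halving strategy identifies all $k = \Theta(\ln m / c)$ elements of $T^*$ using $O(k \log m) = O(\log^2 m)$ adaptive queries, far fewer than $m^{c/8}$. The algorithm then outputs $T^*$ itself, which is agreeable and optimal, so no lower bound --- via Yao, decision-tree leaf counting, or entropy accounting --- can be extracted from this distribution. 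The entropy of $T^*$ is only $\Theta(k \log(m/k)) = \Theta(\log^2 m)$ bits, and your queries really can extract $\Theta(1)$ bits each, so the information-theoretic budget is on the algorithm's side.

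The fix is to change the utility function so that large queries carry no information. The paper plants $T^*$ into the function $g(T) = \mathbf{1}[|T| \geq m/2]$, setting $f_{T^*}(T) = 1$ if $|T| \geq m/2$ \emph{or} $T^* \subseteq T$, and $0$ otherwise. Now every query of size at least $m/2$ returns $1$ regardless of $T^*$, and every query of size less than $m/2$ returns $0$ unless $T^* \subseteq T_j$, an event of probability at most $(|T_j|/m)^{|T^*|} \leq 2^{-\lfloor c \ln m/4 \rfloor} \leq 2m^{-c/6}$ for a uniformly random $T^*$ of size $\lfloor c \ln m / 4 \rfloor$. A union bound over $m^{c/8}$ queries shows the run is indistinguishable from the run on $g$ with probability more than $1/2$, in which case the output must be agreeable for $g$ and hence of size at least $m/2$; comparing the resulting expected output size $m/4$ against the optimum $\lfloor c\ln m/4\rfloor$ gives the ratio, and randomized algorithms are handled by averaging over their coins. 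No decision-tree or entropy machinery is needed once the construction hides $T^*$ properly; your outline's complexity is a symptom of the construction leaking too much.
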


In other words, the above theorem implies that there is no polynomial time algorithm with approximation ratio $o(m / \ln m)$. We note here that our lower bound is information-theoretic and is not based on any computational complexity assumptions. Moreover, it rules out any algorithm that makes a polynomial number of queries, not only those that run in polynomial time.

\begin{proof}[Proof of Theorem~\ref{thm:inapprox-general}]
Let $g: \mathcal{S} \rightarrow [0, 1]$ be defined by
\begin{align*}
g(T) =
\begin{cases}
1 & \text{ if } |T| \geq m / 2, \\
0 & \text{ otherwise.}
\end{cases}
\end{align*}
Moreover, for each subset $T^* \subseteq S$, let $f_{T^*}: \mathcal{S} \rightarrow [0, 1]$ denote the function
\begin{align*}
f_{T^*}(T) =
\begin{cases}
1 & \text{ if } |T| \geq m / 2 \text{ or } T^* \subseteq T, \\
0 & \text{ otherwise.}
\end{cases}
\end{align*}

That is, $f_{T^*}$ is $g$ together with a planted solution $T^*$.

Consider any algorithm $\mathcal{A}$ that makes at most $m^{c/8}$ queries. Assume for the moment that $\mathcal{A}$ is deterministic. Let us examine a run of $\mathcal{A}$ when the agent's utility is $g$. Suppose that $\mathcal{A}$'s queries to $g$ are $T_1, \dots, T_{\lfloor m^{c/8}\rfloor} \subseteq S$.

Let $T^*$ be a random subset of $S$ of size $\lfloor c \ln m / 4 \rfloor$. Let us now consider the queries $\mathcal{A}$ made when the agent's utility is $f_{T^*}$; suppose that the queries made are $T'_1, \dots, T'_{\lfloor m^{c/8}\rfloor} \subseteq S$. For every $j = 1, \dots, \lfloor m^{c/8}\rfloor$, if $T_1 = T'_1, \dots, T_{j - 1} = T'_{j - 1}$ and $g(T_1) = f_{T^*}(T'_1), \dots, g(T_{j - 1}) = f_{T^*}(T'_{j - 1})$, then $\mathcal{A}$ goes through the same computation route for both $g$ and $f_{T^*}$, and hence $T_j = T'_j$. Moreover, when both runs share the same computational route so far and $T_j = T'_j$, we can bound the probability that $g(T_j) \ne f_{T^*}(T'_j)$ as follows. First, if $|T_j| \geq m / 2$, then $g(T_j)$ is always equal to $f_{T^*}(T'_j)$. Otherwise, since $T_j$ is independent of $T^*$, we have %we can bound the probability that $g(T_j) \neq f_{T^*}(T'_j)$ by
\begin{align*}
\Pr[g(T_j) \neq f_{T^*}(T'_j)] &= \Pr[g(T_j) \neq f_{T^*}(T_j)] \\
&= \Pr[T^* \subseteq T_j] \\
&= \frac{\binom{|T_j|}{\lfloor c \ln m / 4 \rfloor}}{\binom{n}{\lfloor c \ln m / 4 \rfloor}} \\
&\leq \left(\frac{|T_j|}{n}\right)^{\lfloor c \ln m / 4 \rfloor} \\
&\leq 2^{-\lfloor c \ln m / 4 \rfloor} \\
&\leq 2 m^{-c / 6}.
\end{align*}

Hence, by union bound, the probability that the two sequences of queries are not identical is at most $(2 m^{-c / 6}) \cdot m^{c / 8} = 2 m^{- c / 24}$, which is less than $1/2$ when $m$ is sufficiently large. Furthermore, observe that when the two sequences are identical, $\mathcal{A}$ must output an agreeable subset for $g$, which is of size at least $m / 2$. Thus, the expected size of the output of $\mathcal{A}$ on $f_{T^*}$ is more than $m / 2 \cdot (1 / 2) = m / 4$. However, the optimal agreeable set for $f_{T^*}$ has size only $\lfloor c \ln m / 4 \rfloor$. As a result, the expected size of the output of $\mathcal{A}$ is more than $m / (c \ln m)$ times the optimum as desired.

Finally, note that if $\mathcal{A}$ is randomized, we can use the above argument on each choice of randomness and average over all the choices, which gives a similar conclusion.
\end{proof}

\section{Additive Utilities}
\label{sec:additive}

In this section, we assume that the agents' preferences are represented by additive utility functions. Each agent $i$ has some nonnegative utility $u_i(x_j)$ for item $x_j$, and $u_i(T)=\sum_{x\in T}u_i(x)$ for any $i\in N$ and any subset of items $T\subseteq S$. 

Clearly, the problem of deciding whether there exists an agreeable set of a certain size is in NP. Moreover, the following theorem shows that it is indeed NP-complete, even when there are two agents with additive utility functions.

\begin{theorem}
\label{thm:twoplayershardness}
Even when there are two agents with additive utility functions, it is NP-hard to decide whether there is an agreeable set of size exactly $\frac{m}{2}$.
\end{theorem}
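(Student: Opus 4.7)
The plan is to reduce from the classical NP-hard PARTITION problem: given positive integers $a_1, \ldots, a_k$ with $\sum_j a_j = 2A$, decide whether some subset of indices has sum exactly $A$. Given such an instance, I will build a two-agent, additive-utility instance on $m = 2k$ items such that an agreeable set of size exactly $m/2$ exists if and only if the PARTITION instance is a yes-instance. Since the problem is trivially in NP, this suffices.

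The construction is the following. Take $k$ ``real'' items $x_1, \ldots, x_k$ and $k$ ``dummy'' items $x_{k+1}, \ldots, x_{2k}$, and set $C := \max_j a_j$. Define $u_1(x_j) = a_j$ for $j \leq k$ and $u_1(x_j) = 0$ for $j > k$, and define agent~2's utility as the \emph{complementary} value $u_2(x_j) := C - u_1(x_j)$, which is nonnegative by the choice of $C$. The point of this complementary definition is to turn agent~2's agreeability inequality into an upper bound on $u_1(T)$.

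To verify the reduction, fix any $T \subseteq S$ with $|T| = k = m/2$. Agreeability to agent~1 amounts to $u_1(T) \geq u_1(S)/2 = A$. For agent~2, note that $u_2(T) = kC - u_1(T)$ and $u_2(S)/2 = kC - A$, so agreeability to agent~2 is equivalent to $u_1(T) \leq A$. Therefore, a size-$m/2$ agreeable set exists iff there is a $k$-element subset $T \subseteq S$ with $u_1(T) = A$. Given a PARTITION solution $I \subseteq \{1,\ldots,k\}$ with $\sum_{j \in I} a_j = A$, pad it with $k - |I|$ arbitrary dummy items (possible because there are exactly $k$ dummies available) to obtain such a $T$. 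Conversely, from such a $T$ the set $I := \{ j \leq k : x_j \in T\}$ satisfies $\sum_{j \in I} a_j = u_1(T) = A$, since dummies contribute $0$ to $u_1$.

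The main obstacle is not technical but conceptual: one must find a way to force both a lower and an upper bound on $u_1(T)$ from a single constraint (agreeability to the second agent). The ``complementary'' utility trick $u_2 := C - u_1$ does exactly this, converting the second agent's $\geq$ constraint into an $\leq$ constraint on $u_1(T)$ once the cardinality is fixed at $m/2$. Polynomial-time computability of the reduction and nonnegativity of the utilities are immediate from the construction.
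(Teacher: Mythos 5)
Your proof is correct and uses essentially the same approach as the paper: the key idea in both is the complementary utility $u_2 = C - u_1$, which together with the fixed cardinality $m/2$ turns agent~2's agreeability constraint into the upper bound $u_1(T) \leq A$. The only difference is that the paper reduces from \textsc{Balanced 2-Partition} (proving its NP-hardness separately by padding a \textsc{2-Partition} instance with zeros), whereas you inline that padding as zero-valued dummy items and reduce directly from \textsc{Partition}; this is a minor streamlining, not a different argument.
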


\begin{proof}
We will reduce from the following problem called {\normalfont \scshape Balanced 2-Partition}: given a multiset $A$ of non-negative integers, decide whether there exists a subset $B \subseteq A$ such that $|B| = |A \setminus B| = |A|/2$ and $\sum_{a \in B} a = \sum_{a \in A \setminus B} a = \sum_{a \in A} a / 2$. 

Like the well-known {\normalfont \scshape 2-Partition} where the cardinality constraint is not included, {\normalfont \scshape Balanced 2-Partition} is NP-hard. %This can be shown by a simple reduction from {\normalfont \scshape 2-Partition}; we omit the proof due to space constraints.
For completeness, we give a simple proof of NP-hardness of {\normalfont \scshape Balanced 2-Partition} in Appendix~\ref{app:np-hard-part}.

The reduction from {\normalfont \scshape Balanced 2-Partition} proceeds as follows. Let $a_1, \dots, a_{|A|}$ be the elements of $A$. The set $S$ contains $|A|$ items $x_1, \dots, x_{|A|}$, each associated with an element of $A$. The utility function is then defined by $u_1(x_i) = a_i$ and $u_2(x_i) = M - a_i$ where $M = \sum_{a \in A} a$. We will next show that this reduction is indeed a valid reduction.

(YES Case) Suppose that there exists $B \subseteq A$ such that $|B| = |A|/2$ and $\sum_{a \in B} a = \sum_{a \in A} a / 2$. We can simply pick $T$ to be all the items corresponding to the elements in $B$. It is obvious that $T$ has size $|A| / 2 = m / 2$ and that $T$ is agreeable.

(NO Case) We prove the contrapositive; suppose that there is an agreeable subset $T \subseteq S$ of size $m/2$. Let $B$ be the corresponding elements in $A$ of all the items in $T$. Since $T$ is agreeable, $\sum_{x \in T} u_i(x) \geq \sum_{x \in S \setminus T} u_i(x)$ for $i \in \{1, 2\}$. When $i = 1$, this implies that $\sum_{a \in B} a \geq \sum_{a \in A} a / 2$. Moreover, when $i = 2$, using the fact that $|T| = m / 2$, we have $\sum_{a \in B} a \leq \sum_{a \in A} a / 2$. Thus, $\sum_{a \in B} a = \sum_{a \in A} a / 2$. Finally, note that $|B| = m / 2 = |A| / 2$. Hence, $A$ is a YES instance for {\normalfont \scshape Balanced 2-Partition}.
\end{proof}

Theorem~\ref{thm:twoplayershardness} shows that the problem is weakly NP-hard even when there are two agents. Nevertheless, when the number of agents is constant, there exists a pseudo-polynomial time dynamic programming algorithm for computing an optimal agreeable set. In particular, the problem is not strongly NP-hard for a constant number of agents.

\begin{theorem}
If the number of agents is constant, then there exists a pseudo-polynomial time algorithm that computes an agreeable set of minimum size.
\end{theorem}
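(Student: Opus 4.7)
The plan is to use a standard multi-dimensional knapsack-style dynamic program that exploits the assumption of a constant number of agents $n$. First I would precompute $V_i := u_i(S) = \sum_{j=1}^{m} u_i(x_j)$ for each agent $i$; additivity reduces agreeability to the simple condition $u_i(T) \geq V_i / 2$. The task thus becomes: minimize $|T|$ subject to $u_i(T) \geq V_i / 2$ for every $i \in N$. Assuming the utilities are integers (rationals can be scaled up by their common denominator, at a cost in runtime that is itself pseudo-polynomial), this is a multi-dimensional integer-coverage problem with $n$ constraints.

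Next I would define the table $D[k; v_1, \ldots, v_n]$, which records the minimum cardinality of a subset of $\{x_1, \ldots, x_k\}$ whose utility vector under the $n$ agents equals exactly $(v_1, \ldots, v_n)$, and is set to $+\infty$ when no such subset exists. The base case is $D[0; 0, \ldots, 0] = 0$, with all other $D[0; \cdot]$ entries equal to $+\infty$, and the transition is
\[
D[k; v_1, \ldots, v_n] = \min\bigl\{\, D[k-1; v_1, \ldots, v_n],\ D[k-1; v_1 - u_1(x_k), \ldots, v_n - u_n(x_k)] + 1 \,\bigr\},
\]
where the second branch is used only when every $v_i - u_i(x_k)$ is non-negative. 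The optimum is then $\min\{\, D[m; v_1, \ldots, v_n] : v_i \geq \lceil V_i / 2 \rceil \text{ for all } i \,\}$, and a witnessing minimum-size agreeable set is recovered by standard backtracking through the DP.

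For the runtime, the number of states is at most $(m+1) \prod_{i=1}^{n}(V_i + 1) \leq (m+1)(U+1)^n$, where $U := \max_i V_i$, and each transition costs $O(n)$ time. Since $n$ is fixed, this yields a running time polynomial in $m$ and $U$, i.e., pseudo-polynomial in the input. I do not anticipate any deep obstacle; the only points requiring care are the exponential dependence on $n$ (which is why the theorem genuinely needs $n$ to be constant) and the preprocessing step of scaling the utilities to integers so that the DP indices are well-defined.
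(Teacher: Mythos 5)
Your proposal is correct and matches the paper's proof essentially verbatim: the paper builds the same table $\Sigma(m', y_1, \dots, y_n)$ of minimum cardinalities achieving exact utility vectors over the first $m'$ items, uses the same transition, and reads off the answer over entries with $y_i \geq \sigma_i/2$, for the same $O(m\sigma_1\cdots\sigma_n)$ running time. Your remark about scaling rational utilities to integers is a small, harmless addition not present in the paper, which simply assumes integer utilities.
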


\begin{proof}
The algorithm uses dynamic programming. Assume that the utilities of agent $i$ for the items are integers with sum $\sigma_i$. We construct a table $\Sigma$ of size $(m+1)(\sigma_1+1)\dots(\sigma_n+1)$, where for each $0\leq m'\leq m$ and each tuple $(y_1,\dots,y_n)$ with $0\leq y_i\leq \sigma_i$, the entry $\Sigma(m',y_1,\dots,y_n)$ of the table corresponds to the minimum number of items from among the first $m'$ items that we need to include so that agent $i$ has utility exactly $y_i$ for all $i$ (if this is achievable). Initially we have $\Sigma(0,0,\dots,0)=0$ and $\Sigma(m',y_1,\dots,y_n)=\infty$ otherwise. We then iterate through the values of $m'$ in increasing order. For each $m'\geq 1$, we update the entries of the table as follows:
\begin{itemize}
\item If $u_i(x_{m'})\leq y_i$ for all $i$ and $$1+\Sigma\left(m'-1,y_1-u_1(x_{m'})\dots,y_n-u_n(x_{m'})\right)<\Sigma(m'-1,y_1,\dots,y_n),$$ let $\Sigma(m',y_1,\dots,y_n)=1+\Sigma\left(m'-1,y_1-u_1(x_{m'})\dots,y_n-u_n(x_{m'})\right)$. 
\item Else, let $\Sigma(m',y_1,\dots,y_n)=\Sigma(m'-1,y_1,\dots,y_n)$.
\end{itemize}

Finally, we look up the entries $\Sigma(m,y_1,\dots,y_n)$ for which $y_i\geq \frac{\sigma_i}{2}$ holds for all $i$ and return the minimum value over all such entries. This algorithm takes time $O(m\sigma_1\dots \sigma_n)$. Note that if we also want to return an agreeable set (rather than just the minimum size), we can also keep track of the sets of items along with the values in our table. 
\end{proof}

While there is a pseudo-polynomial time algorithm for the problem when the number of agents is fixed, we show below that when the number of agents is not fixed, the problem becomes strongly NP-hard. In other words, there is no pseudo-polynomial time algorithm for this variation unless P=NP. 

\begin{theorem}
When the number of agents is given as part of the input, it is strongly NP-hard to decide whether there is an agreeable set of size exactly $\frac{m+1}{2}$.
\end{theorem}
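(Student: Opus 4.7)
The plan is to reduce from 3-SAT, which is strongly NP-hard. Given a 3-CNF formula $\phi$ with variables $X_1,\ldots,X_v$ and clauses $C_1,\ldots,C_c$, I would build an instance of the agreeable-set problem with $m = 2v+2c-1$ items, so that $(m+1)/2 = v+c$. The items split into three groups: for each variable $X_i$, two items $x_i^+,x_i^-$ (the intended meaning is that $x_i^+\in T$ encodes $X_i=\mathrm{true}$); for each clause $C_j$, a single ``anchor'' item $y_j$; and $c-1$ dummy items whose only role is to make the cardinality arithmetic come out right. The number of agents will be $v+2c$, and every utility will lie in $\{0,1\}$, so the reduction is strongly polynomial.

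The agents encode the 3-SAT constraints as follows. For each variable $i$, a ``pair'' agent assigns utility $1$ to the two items $x_i^+,x_i^-$ and $0$ elsewhere; agreeability for this agent forces $|T\cap\{x_i^+,x_i^-\}|\geq 1$. For each clause $j$, an ``anchor'' agent assigns utility $1$ only to $y_j$, which forces $y_j\in T$. Finally, for each clause $j$, a ``clause'' agent assigns utility $1$ to the four items $\{y_j,\ell_{j,1},\ell_{j,2},\ell_{j,3}\}$, where $\ell_{j,k}$ is the $\pm$-item for the $k$-th literal of $C_j$; here agreeability demands that at least $2$ of those $4$ items lie in $T$.

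To verify correctness, in the forward direction I would take a satisfying assignment and let $T$ contain the appropriate $x_i^\pm$ for each variable together with every $y_j$. Then $|T|=v+c=(m+1)/2$, every pair and anchor agent is satisfied, and each clause agent sees $y_j$ in $T$ plus at least one satisfied literal item, giving at least $2$. Conversely, any agreeable $T$ with $|T|=v+c$ must by the anchor agents contain all $c$ items $y_j$, leaving $v$ slots for pair and dummy items; the pair agents then force at least one item per pair, which exhausts the budget and leaves exactly one item per pair and no dummies. Defining $X_i$ to be true iff $x_i^+\in T$, the clause agent's ``at least $2$ of $4$'' constraint, together with $y_j\in T$, becomes ``at least one literal of $C_j$ is in $T$,'' i.e.\ clause $C_j$ is satisfied. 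Hence $\phi$ is satisfiable iff an agreeable set of size $(m+1)/2$ exists.

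The main difficulty I expect is that a purely additive agent can only enforce an ``at least half'' threshold on its support, which by itself cannot express a 3-literal disjunction. The anchor trick is what gets around this: by first forcing $y_j\in T$ via a dedicated singleton agent and then bundling $y_j$ into the clause agent's $4$-item support, the ``$2$ out of $4$'' threshold collapses to ``$1$ out of $3$ literals,'' which is exactly the clause. The secondary bookkeeping step, choosing precisely $c-1$ dummies, is what makes $m$ odd and forces the structural conclusion (exactly one per pair, all anchors in, no dummies in) from the single global constraint $|T|=(m+1)/2$.
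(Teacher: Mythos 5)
Your reduction is correct and follows essentially the same strategy as the paper's: both reduce from 3SAT using one agent per variable to force a choice of literal and one agent per clause whose support couples the clause's literal items with an anchor item that is forced into $T$, so that the additive ``at least half of the support'' threshold collapses to the clause disjunction. The paper achieves this with a single shared anchor item $a$ valued by every agent (which makes $m = 2n'+1$ odd automatically, with no dummy items or separate anchor agents needed), but that is only a bookkeeping difference; the one point you should make explicit, as the paper does, is the WLOG that every clause contains at least two distinct literals, since otherwise a clause agent supported on the anchor and a single literal item is satisfied by the anchor alone.
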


\begin{proof}
We will reduce from {\normalfont \scshape 3SAT}. Given a 3SAT formula $\phi$ with $m'$ clauses $C_1, \dots, C_{m'}$ on $n'$ variables $y_1, \dots, y_{n'}$, let there be $n=m' + n'$ agents $C_1, \dots, C_{m'}, y_1, \dots, y_{n'}$ and $m=2n' + 1$ items where $2n'$ items correspond to all the literals $y_1, \neg y_1, \dots, y_{n'}, \neg y_{n'}$ and the remaining item is called $a$. The utility function is defined by
\begin{align*}
u_{C_i}(b) =
\begin{cases}
1 & \text{ if } b = a, \\
1 & \text{ if the literal } b \text{ is present in } C_i, \\
0 & \text{ otherwise,}
\end{cases}
\end{align*}
and
\begin{align*}
u_{y_i}(b) =
\begin{cases}
1 & \text{ if } b = a, \\
1 & \text{ if } b = y_i \text{ or } b=\neg y_i \\
0 & \text{ otherwise.}
\end{cases}
\end{align*}
We will next show that this is a valid reduction. First, note that all the integer parameters are now polynomial in the size of the input. Hence, we are left to show that YES and NO instances of 3SAT map to YES and NO instances of our problem respectively.

(YES Case) Suppose that there exists an assignment that satisfies $\phi$. For each $y_i$, let $b_i$ be the literal of $y_i$ that is true according to this assignment. Let the set $T$ be $\{a, b_1, \dots, b_{n'}\}$. Since each $C_j$ is satisfied by the assignment, we have $\sum_{i=1}^{n'} u_{C_j}(b_i) \geq 1$. Hence, we have $\sum_{x \in T} u_{C_j}(x) \geq 2$, which implies that $T \succeq_{C_j} S \setminus T$. 

Moreover, for each variable $y_i$, $\sum_{x \in T} u_{y_i}(x) = 2$, which also implies that $T \succeq_{y_i} S \setminus T$. As a result, $T$ is an agreeable set of size $n' + 1 = \frac{m+1}{2}$ as desired.

(NO Case) We again prove the contrapositive; suppose that there exists an agreeable set $T \subseteq S$ of size $\frac{m+1}{2} = n' + 1$. We can assume without loss of generality that $a \in T$. Indeed, since the utility of any agent for $a$ is at least as much as the utility of the agent for any other item, if $a \notin T$, we can remove an arbitrary item from $T$ and add $a$ in while maintaining the agreeability of $T$. Moreover, we also assume without loss of generality that each clause of $\phi$ has at least two variables---it is obvious that every 3SAT formula can be transformed to this form in polynomial time.

Since $T \succeq_{y_i} S \setminus T$, at least one literal corresponding to $y_i$ is included in $T$. Moreover, since the size of $T$ is $n' + 1$ and $a \in T$, exactly one literal of each $y_i$ is in $T$; let $b_i$ be this literal. Consider the assignment to the variables such that all the $b_i$'s are satisfied. Since $T \succeq_{C_j} S \setminus T$ for every $C_j$ and $C_j$ contains at least two literals, at least one literal in $C_j$ is satisfied by this assignment. Hence, this assignment satisfies the formula $\phi$.
\end{proof}

Given that finding an agreeable set of minimum size is NP-hard, it is natural to attempt to find an approximation algorithm for the problem. When the utilities are additive, this turns out to be closely related to approximating the classical problem {\normalfont \scshape Set Cover}. In {\normalfont \scshape Set Cover}, we are given a ground set $U$ and a collection $\mathcal{C}$ of subsets of $U$. The goal is to select a minimum number of subsets whose union is the entire set $U$. 

{\normalfont \scshape Set Cover} was one of the first problems shown to be NP-hard in Karp's seminal paper~\cite{Karp72}. Since then, its approximability has been intensely studied and has, by now, been well understood. A simple greedy algorithm is known to yield a $(\ln |U| + 1)$-approximation for the problem~\cite{Johnson74,Lovasz75}. On the other hand, a long line of work in hardness of approximation~\cite{LundYa94,RazSa97,Feige98,AlonMoSa06,Moshkovitz15} culminates in Dinur and Steurer's work, in which a $(1 - \varepsilon)\ln |U|$ ratio NP-hardness of approximation for {\normalfont \scshape Set Cover} was proved for every constant $\varepsilon > 0$ \cite{DinurSt14}.

The first connection we will make between {\normalfont \scshape Set Cover} and approximating minimum size agreeable set is on the negative side---we will show that any inapproximability result of {\normalfont \scshape Set Cover} can be translated to that of approximating minimum size agreeable set as well. To do so, we will first state Dinur and Steurer's result more precisely.

\begin{lemma}[\cite{DinurSt14}]
For every constant $\varepsilon > 0$, there is a polynomial time reduction from any {\normalfont \scshape 3SAT} formula $\phi$ to a {\normalfont \scshape Set Cover} instance $(U, \mathcal{C})$ and $f(U) = \poly(|U|)$ such that
\begin{itemize}
\item (Completeness) if $\phi$ is satisfiable, the optimum of $(U, \mathcal{C})$ is at most $f(U)$.
\item (Soundness) if $\phi$ is unsatisfiable, the optimum of $(U, \mathcal{C})$ is at least $((1 - \varepsilon) \ln |U|) f(U)$.
\end{itemize}
\end{lemma}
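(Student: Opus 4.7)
The statement is quoted directly from Dinur and Steurer's paper, so in practice one would not reprove it; the intended ``proof'' is an appeal to the cited result. Nonetheless, if I had to sketch how to establish it from scratch, I would follow the standard PCP-to-Set-Cover pipeline. The plan is to start from a gap version of \textsc{3SAT} (the PCP theorem) and convert it into a two-prover one-round game, i.e.\ an instance of \textsc{Label Cover} with alphabet $\Sigma$ and perfect completeness and soundness $s < 1$. Parallel repetition (Raz's theorem, with modern quantitative improvements) then amplifies this to completeness $1$ versus soundness $s^{\Omega(k)}$ on an instance whose size grows only like $|\phi|^{O(k)}$. The repetition parameter $k$ will be tuned, depending on $\varepsilon$, to give the tight logarithmic ratio at the end.

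Next, I would compose the amplified label-cover instance with a combinatorial gadget: a \emph{partition system} (or a set system derived from a code/design). This is a ground set $U_0$ equipped with $|\Sigma|$ partitions, each into a small number of blocks, such that any cover using at most one block from each partition has size $|\Sigma|$, while any cover that mixes blocks from many partitions has size at least $(1-\varepsilon')\ln|U_0|$. One then constructs the final \textsc{Set Cover} instance by placing one copy of the partition system at each vertex of the label-cover instance and using the projection constraints to link the partitions across copies.

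The completeness and soundness analyses then go as follows. For completeness, a satisfying assignment to $\phi$ induces a labelling of the label-cover instance, and picking, at each copy of the partition system, the block indexed by the chosen label gives a cover of total size $f(U) := |\Sigma| \cdot (\text{\# copies})$, which is $\poly(|U|)$. For soundness, if $\phi$ is unsatisfiable, no labelling satisfies more than an $s^{\Omega(k)}$ fraction of constraints, so any cover must, on a large fraction of copies, mix blocks from many partitions, each of which contributes at least roughly $(1-\varepsilon')\ln|U_0|$ elements; averaging gives a total cover size at least $(1-\varepsilon)\ln|U|\cdot f(U)$.

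The main obstacle is calibrating all of these parameters simultaneously so that the ratio in the soundness case is truly $(1-\varepsilon)\ln |U|$, not merely $c\cdot \ln|U|$ for some absolute constant $c<1$, while keeping the reduction of polynomial size. This is exactly the contribution of Dinur and Steurer: they supply a near-linear-size PCP and a tight analysis of the composed partition-system gadget that together eliminate the slack present in earlier constructions such as Feige's and Moshkovitz--Raz's. Rather than redo this careful parameter-tracking, I would simply cite the lemma as stated and proceed to use it in the reduction to minimum agreeable set.
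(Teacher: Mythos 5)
The paper offers no proof of this lemma at all---it is imported verbatim from Dinur and Steurer via the citation---and your proposal correctly handles it the same way, by appealing to the cited result. Your accompanying sketch of the PCP/label-cover/partition-system pipeline is a reasonable high-level account of how such a theorem is proved, but it is not part of (or needed for) the paper's argument.
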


We are now ready to prove hardness of approximation for minimum size agreeable set.

\begin{theorem}
For any constant $\delta>0$, it is NP-hard to approximate minimum size agreeable set to within a factor $(1 - \delta) \ln n$ of the optimum.
\end{theorem}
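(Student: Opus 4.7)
The plan is to reduce the $(1-\varepsilon)\ln|U|$-gap {\normalfont \scshape Set Cover} instances produced by the preceding Dinur--Steurer lemma to the agreeable-set problem, introducing a single ``pivot'' item that converts the covering constraint of set cover into the majority-threshold constraint defining additive agreeability, while incurring only a $+1$ additive overhead in the optimum.

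Concretely, given a set cover instance $(U, \mathcal{C})$ with gap parameter $f(U)$, I would construct an additive-utility instance with agents $N = U$ and items $S = \{x_C : C \in \mathcal{C}\} \cup \{p\}$, where $p$ is a fresh pivot item. Writing $d_u := |\{C \in \mathcal{C} : u \in C\}|$ for the frequency of $u$, the utilities are
\[
u_u(x_C) = \mathbf{1}[u \in C], \qquad u_u(p) = d_u - 1,
\]
so $u_u(S) = 2d_u - 1$ and, by integrality, agreeability for agent $u$ is the condition $u_u(T) \geq d_u$. The calibration $u_u(p) = d_u - 1$ is chosen precisely so that when $p \in T$ the condition reduces to $|T \cap \{x_C : u \in C\}| \geq 1$ (that is, $T \cap \mathcal{C}$ covers $u$), whereas when $p \notin T$ it forces every $x_C$ with $u \in C$ to lie in $T$, hence (after removing redundant sets from $\mathcal{C}$) forces $T \supseteq \{x_C : C \in \mathcal{C}\}$, of size $|\mathcal{C}|$. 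The minimum agreeable set therefore has size $\min(1 + k^*, |\mathcal{C}|) = 1 + k^*$, where $k^*$ is the minimum set cover size, provided the set cover instance is nontrivial.

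Finally, let $n = |U|$. In the YES case the agreeable-set optimum is at most $1 + f(U)$, while in the NO case it is at least $1 + (1-\varepsilon)(\ln n) f(U)$, so the gap ratio
\[
\frac{1 + (1-\varepsilon)(\ln n) f(U)}{1 + f(U)}
\]
tends to $(1-\varepsilon) \ln n$ as $f(U) \to \infty$. Given $\delta > 0$, I would invoke the lemma with $\varepsilon = \delta/2$ and ensure $f(U)$ is a sufficiently large polynomial in $|U|$ (padding the set cover instance if needed) so that the above ratio exceeds $(1-\delta)\ln n$ for all sufficiently large inputs, yielding NP-hardness of $(1-\delta)\ln n$-approximation. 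The main obstacle is finding the correct value for $u_u(p)$: too small and a single covering set together with $p$ fails the agreeability threshold; too large and the pivot is agreeable on its own, destroying the correspondence with set cover. The value $d_u - 1$ is exactly the one for which the integer quantity $u_u(T)$ crosses the half-threshold $(2d_u - 1)/2$ precisely when $p$ plus any one covering set is picked, which is what lets the optimum be pinned down to $1 + k^*$ rather than some looser multiple.
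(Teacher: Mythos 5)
Your proposal is correct and is essentially the paper's own reduction: the same pivot/special item with utility $d_u-1$ calibrated so that agreeability for agent $u$ becomes the covering condition, followed by the same gap calculation from the Dinur--Steurer Set Cover hardness with $\varepsilon=\delta/2$. The only difference is cosmetic---you pin the optimum down exactly as $\min(1+k^*,|\mathcal{C}|)$ where the paper only needs the one-sided bounds, and the requirement you place on $f(U)$ need only be a constant depending on $\delta$, not a large polynomial.
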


\begin{proof}
Let $\varepsilon = \delta / 2$. Given a 3SAT formula $\phi$, we first use Dinur-Steurer reduction to produce a {\normalfont \scshape Set Cover} instance $(U, \mathcal{C})$. Let there be $|U|$ agents, each of whom is associated with each element of $U$; it is convenient to think of the set of agents as simply $N = U$. As for the items, let there be one item for each subset $C \in \mathcal{C}$ and additionally let there be one special item called $t$; in other words, $S = \mathcal{C} \cup \{t\}$.

The utility for each $a \in U$ is then defined by
\begin{align*}
u_a(s) =
\begin{cases}
|\{C \in \mathcal{C} \mid a \in C\}| - 1 & \text{ if } s = t, \\
1 & \text{ if } s \in \mathcal{C} \text{ and } a \in s, \\
0 & \text{ otherwise}.
\end{cases}
\end{align*}
We show next that this reduction indeed gives the desired inapproximability result.

(Completeness) If $\phi$ is satisfiable, then there are $f(|U|)$ subsets from $\mathcal{C}$ that together cover $U$. We can take $T$ to contain all these subsets and the special item $t$. Clearly, $T$ has size $f(|U|) + 1$ and is agreeable.

(Soundness) If $\phi$ is unsatisfiable, then any set cover of $(S, \mathcal{C})$ contains at least $((1 - \varepsilon) \ln |U|)f(|U|)$ subsets. Consider any agreeable set $T$. For each $a \in U$, from our choice of $u_a(t)$, $T$ must include at least one subset that contains $a$. In other words, $T \setminus \{t\}$ is a set cover of $(S, \mathcal{C})$. Hence, $|T|$ must also be at least $((1 - \varepsilon) \ln |U|)f(|U|)$. 

Thus, it is NP-hard to approximate minimum size agreeable set to within $\frac{((1 - \varepsilon) \ln |U|)f(|U|)}{f(|U|) + 1}$ of the optimum. This ratio is at least $(1 - \delta) \ln n$ when the number of agents, $n = |U|$, is sufficiently large.
\end{proof}

Unlike the above inapproximability result, it is unclear how algorithms for {\normalfont \scshape Set Cover} can be used to approximate minimum size agreeable set.  Fortunately, our problem is in fact a special case of a generalization of {\normalfont \scshape Set Cover} called {\normalfont \scshape Covering Integer Program} ({\normalfont \scshape CIP}), which can be written as follows:
\begin{align*}
\text{minimize } & c^T x \\
\text{subject to } & Ax \geq 1, \\
& 0 \leq x \leq u \\
& x \in \mathbb{Z}^m
\end{align*}
where $c, u \in \mathbb{R}^m$ and $A \in \mathbb{R}^{n \times m}$ are given as input.

The problem of finding a minimum size agreeable set can then be formulated in this form simply by setting $c, u$ and $A$ as follows.
\begin{align*}
c_s = 1 & & \forall s \in S \\
u_s = 1 & & \forall s \in S \\
A_{i, s} = \frac{2 u_i(s)}{\sum_{s' \in S} u_i(s')} & & \forall i \in N, s \in S
\end{align*}

Similarly to {\normalfont \scshape Set Cover}, approximability of {\normalfont \scshape CIP} has been well studied; specifically, the problem is known to be approximable to within $O(\ln n)$ of the optimum in polynomial time~\cite{KolliopoulosYo05}. This immediately implies an $O(\ln n)$-approximation algorithm for finding a minimum size agreeable set as well:
\begin{theorem}
When the agents' utility functions are additive, there is a polynomial time $O(\ln n)$-approximation algorithm for finding a minimum size agreeable set.
\end{theorem}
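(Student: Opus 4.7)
The plan is to formulate the minimum size agreeable set problem as a Covering Integer Program (CIP) exactly as indicated in the excerpt, and then invoke the $O(\ln n)$-approximation algorithm for CIP due to Kolliopoulos and Young. First I would introduce a binary variable $x_s \in \{0,1\}$ for each item $s \in S$, with the interpretation $x_s = 1$ iff $s$ is placed in the output set $T$. Under additive utilities, the agreeability condition $u_i(T) \geq u_i(S \setminus T)$ is equivalent to $2 u_i(T) \geq u_i(S)$, which in turn can be written as $\sum_{s \in S} A_{i,s}\, x_s \geq 1$ with $A_{i,s} = 2u_i(s)/\sum_{s' \in S} u_i(s')$. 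The objective $\sum_{s \in S} x_s = |T|$ then corresponds to the CIP cost with $c_s = 1$ and upper bound $u_s = 1$, matching the formulation given in the paper.

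Next I would handle the small bookkeeping items. If $\sum_{s'} u_i(s') = 0$ for some agent $i$, then every subset is agreeable to agent $i$ and the corresponding constraint is vacuous, so we simply discard these agents; this keeps $A$ well-defined with nonnegative entries. The resulting instance has $n$ constraints (one per remaining agent) and $m$ variables, and its integer optimum is exactly the size of a minimum agreeable set.

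Finally, I would invoke the Kolliopoulos--Young theorem: their algorithm returns, in time polynomial in the input size, an integer feasible solution of cost within a factor $O(\ln n)$ of the LP-relaxation optimum, and hence within $O(\ln n)$ of the integer optimum. Mapping the output $x$ back to $T = \{s : x_s = 1\}$ yields an agreeable set whose size is within $O(\ln n)$ of the minimum. The only mild obstacle is confirming that the Kolliopoulos--Young result applies to our parameter regime (real-valued nonnegative $A_{i,s}$, $0/1$-bounded variables, unit costs, and $n$ equal to the number of constraints); since their theorem is stated for general CIPs of precisely this form, no further argument beyond citation is required.
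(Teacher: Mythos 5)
Your proposal is correct and follows essentially the same route as the paper: the identical CIP formulation with $c_s = u_s = 1$ and $A_{i,s} = 2u_i(s)/\sum_{s'} u_i(s')$, followed by an appeal to the Kolliopoulos--Young $O(\ln n)$-approximation for covering integer programs. Your extra remark about discarding agents with zero total utility is a small but legitimate bookkeeping point that the paper glosses over.
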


\section{Conclusion}

In this paper, we study the problem of efficiently computing an approximately optimal agreeable set of items. We consider three well-known models for representing the preferences of the agents and derive essentially tight bounds on the approximation ratio that can be achieved in polynomial time in each model. When we only have access to agents' rankings on single items and the number of agents is constant, we can efficiently compute a necessarily agreeable subset of size $\frac{m}{2}+O(\sqrt{m\log\log m})$; this is almost tight since any such subset must contain at least $\frac{m}{2}$ items. Next, for general preferences, we exhibit an efficient algorithm with approximation ratio $O(m\ln\ln m/\ln m)$ under the value oracle model, and we show that no efficient algorithm can achieve approximation ratio $o(m/\ln m)$. On the other hand, for additive valuations, an algorithm with approximation ratio $O(\ln n)$ exists for computing a minimum size agreeable set; this is complemented by the result that it is NP-hard to approximate the problem to within a factor of $(1-\delta)\ln n$ for any $\delta>0$.

We conclude the paper by listing some directions for future work. Firstly, in the ranking model, it would be interesting to close the gap between our upper bound of $\frac{m}{2}+O(\sqrt{m\log\log m})$ and the lower bound of $\frac{m}{2}+O(1)$ by Suksompong \cite{Suksompong16} on the minimum size necessarily agreeable set. Secondly, one can consider a probabilistic setting, where the utilities of the agents are drawn from some distributions, and determine the size of the smallest agreeable set that is likely to exist in such a setting. Thirdly, the question of truthfulness has not been addressed in this or previous work. One can ask whether similar approximation ratios can be obtained or whether additional limitations arise in the presence of strategic agents. Finally, a related line of research, which has recently received attention, is to design algorithms that fairly allocate items among groups of agents \cite{ManurangsiSu17,Suksompong17}.

\bibliographystyle{abbrv}
\bibliography{main-full}

\appendix

\section{NP-hardness of {\normalfont \scshape Balanced 2-Partition}} \label{app:np-hard-part}

In this section, we show that {\normalfont \scshape Balanced 2-Partition} is NP-hard via a reduction from {\normalfont \scshape 2-Partition}, a well-known NP-hard problem.

\begin{lemma}
{\normalfont \scshape Balanced 2-Partition} is NP-hard.
\end{lemma}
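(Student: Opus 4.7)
My plan is to reduce from the standard \textsc{2-Partition} problem, which asks, given a multiset $A$ of positive integers, whether $A$ can be split into two subsets of equal sum with no constraint on the cardinalities. Since \textsc{2-Partition} is well-known to be NP-hard, a polynomial-time reduction to \textsc{Balanced 2-Partition} will establish the claim.

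The key idea is padding with zeros to decouple the cardinality constraint from the sum constraint. Given an instance $A$ of \textsc{2-Partition} with $|A| = k$ and $\sum_{a \in A} a = s$, I would construct $A'$ by adjoining $k$ copies of $0$ to $A$, so $|A'| = 2k$ and $\sum_{a \in A'} a = s$. The reduction clearly runs in polynomial time, so the work is entirely in verifying that \textsc{Balanced 2-Partition} on $A'$ has the same answer as \textsc{2-Partition} on $A$.

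For the forward direction, suppose $A$ admits a subset $B \subseteq A$ with $\sum_{a \in B} a = s/2$. Let $j := |B|$, so $|A \setminus B| = k - j$. I would form $B' \subseteq A'$ by taking $B$ together with $k - j$ of the adjoined zeros; then $|B'| = k = |A'|/2$ and $\sum_{a \in B'} a = s/2$, as required. The remaining $j$ zeros go to the other side, which automatically has the correct size and sum. For the converse, if $A'$ has a balanced 2-partition witnessed by $B' \subseteq A'$ with $|B'| = k$ and $\sum_{a \in B'} a = s/2$, I would simply set $B := B' \cap A$; discarding the zeros does not change the sum, so $\sum_{a \in B} a = s/2$, giving a valid \textsc{2-Partition} of $A$.

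This is a very short argument with no real obstacle; the only point that needs checking is that the number of adjoined zeros (namely $k$) is large enough to balance any cardinality gap $|\,|B| - |A \setminus B|\,| \leq k$, which it is by construction. If the definition of \textsc{2-Partition} one reduces from insists on positive integers only, the same padding argument still works because zeros are explicitly allowed in \textsc{Balanced 2-Partition}.
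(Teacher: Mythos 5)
Your proposal is correct and matches the paper's proof essentially verbatim: both reduce from \textsc{2-Partition} by padding the instance with $|A|$ zeros, use the zeros to absorb the cardinality imbalance in the forward direction, and discard them in the converse. No issues.
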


\begin{proof}
We reduce from {\normalfont \scshape 2-Partition}, a problem in which a set $B$ of positive integers is given and the goal is to decide whether there exists a set $T \subseteq B$ such that $\sum_{b \in T} b = \sum_{b \in B \setminus T} b$. {\normalfont \scshape 2-Partition} is known to be NP-complete (see, e.g.,~\cite{GareyJo79}).

Given a {\normalfont \scshape 2-Partition} instance, we create a {\normalfont \scshape Balanced 2-Partition} instance as follows. Let $A$ be the multiset containing all elements of $B$ and $|B|$ additional zeros. Clearly, the reduction runs in polynomial time. We will next show that $B$ is a YES instance of {\normalfont \scshape 2-Partition} if and only if $A$ is a YES instance of {\normalfont \scshape Balanced 2-Partition}.

(YES Case) Suppose that $B$ is a YES instance of {\normalfont \scshape 2-Partition}, i.e., there exists $T \subseteq B$ such that $\sum_{b \in T} b = \sum_{b \in B \setminus T} b$. Let $S \subseteq A$ be the multiset containing all elements of $T$ and $|B| - |T|$ additional zeros. Clearly, $|S| = |B| = |A| / 2$ and $\sum_{a \in S} a = \sum_{b \in T} b = \sum_{b \in B} b / 2 = \sum_{a \in A} a / 2$, meaning that $A$ is a YES instance of {\normalfont \scshape Balanced 2-Partition} as desired.

(NO Case) We prove the contrapositive; suppose that $A$ is a YES instance of {\normalfont \scshape Balanced 2-Partition}. This means that there exists $S \subseteq A$ of size $|A| / 2 = |B|$ such that $\sum_{a \in S} a = \sum_{a \in B \setminus S} a$. Let $T$ be the subset of $B$ containing all elements of $S$ whose corresponding elements are included in $B$. Clearly, we have $\sum_{b \in T} b = \sum_{a \in S} a = \sum_{a \in B \setminus S} a = \sum_{b \in B \setminus T} b$. Hence, $B$ is a YES instance of {\normalfont \scshape 2-Partition}. 
\end{proof}

\end{document}